\documentclass[pre,letterpaper,twocolumn,final,superscriptaddress,floatfix]{revtex4}


\usepackage[utf8]{inputenc}
\usepackage[T1]{fontenc}
\usepackage{lmodern}
\usepackage{amssymb, amsfonts, amsthm, amsmath,dsfont}
\usepackage[english]{babel}
\usepackage[pdfborder={0 0 0}]{hyperref}
\usepackage{enumitem}
\usepackage{tabularx}

\usepackage{tikz}
\usepackage{subfigure}

\theoremstyle{plain}
\newtheorem{theorem}{Theorem}[section]

\newtheorem{lemma}[theorem]{Lemma}

\theoremstyle{definition}

\theoremstyle{remark}
\newtheorem{remark}[theorem]{Remark}


\renewcommand{\P}{\mathbb{P}}

\newcommand{\dcens}{d_{\text{cens}}}

\newcommand{\N}{\mathbb{N}}

\newcommand{\ind}[1]{\mathbf{1}_{\left\{#1\right\}}}

\newcommand{\floor}[1]{{\left\lfloor #1 \right\rfloor}}

\newcommand{\refn}[1]{Eq.~(\ref{#1})}
\definecolor{blanchedalmond}{rgb}{1.0, 0.92, 0.8}
\usepackage{mdframed}
\usepackage{ragged2e}

\renewcommand{\bar}[1]{\overline{#1}}

\newcommand{\perf}{\text{(perf)}}
\newcommand{\imp}{\text{(imper)}}

\usepackage{xcolor}

\newcommand{\titlename}[0]{Group testing as a strategy for the epidemiologic monitoring of COVID-19}

\begin{document}

\title{\titlename}
\author{Vincent Brault} 
\affiliation{Université Grenoble Alpes, CNRS, Grenoble INP, LJK, 38000 Grenoble, France,}
\affiliation{Members of the GROUPOOL collective \& Participants in the MODCOV19 initiative.}
\author{Bastien Mallein} 
\affiliation{Université Sorbonne Paris Nord, LAGA, UMR 7539, F-93430, Villetaneuse, France.}
\affiliation{Members of the GROUPOOL collective \& Participants in the MODCOV19 initiative.}
\author{Jean-François Rupprecht} 
\email{vincent.brault@univ-grenoble-alpes.fr; mallein@math.univ-paris13.fr; rupprecht@cpt.univ-mrs.fr}
\affiliation{Aix Marseille Univ, CNRS, Centre de Physique Théorique, Turing Center for Living Systems, Marseille, France.}
\affiliation{Members of the GROUPOOL collective \& Participants in the MODCOV19 initiative.}

\date{\today}

\begin{abstract}
Sample pooling consists in combining samples from multiple individuals into a single pool that is then tested using a unique test-kit. A positive test means that at least one individual within the pool is infected. Here, we propose an analysis and applications of sample pooling to the epidemiologic monitoring of COVID-19. We first introduce a model of the RT-qPCR process used to test for the presence of virus in a sample and construct a statistical model for the viral load in a typical infected individual inspired by the clinical data from Jones et.\@ al.\@ (2020). We then propose a method for the measure of the prevalence in a population, based on group testing, taking into account the increased number of false negatives associated with this method. Finally, we present an application of sample pooling for the prevention of epidemic outbreak in closed connected communities (e.g. nursing homes).
\end{abstract}

\maketitle

Regularly monitoring the \textit{prevalence} of a disease , i.e. the proportion of infected individuals within the general population at a given time, is a key element to prevent the onset of an epidemic wave, to estimate the effect of social distancing policies and to anticipate a potential increase in the demand for hospitalization in intensive care units~\cite{Salje2020}.

In the context of the COVID-19 epidemics, contagious individuals are generally assumed to bear a viral load of SARS-CoV-2 in their respiratory tract~\cite{Wolfel2020,WorldHealthOrganization2020}. Such viral load can be detected and quantified within swab samples using a technique called \emph{reverse transcription quantitative polymerase chain reaction} (RT-qPCR)~\cite{Corman2020}. With tests performed in priority on symptomatic patients, the proportion of positive tests (which corresponds to an \textit{apparent} prevalence~\cite{Speybroeck2013}) is larger than the prevalence among the whole population, which we call  \textit{overall} prevalence. In principle, the overall prevalence could be deduced from the apparent prevalence based on inferred  model estimates for the proportion of tested individuals among the infected population. However, such reliable estimation is challenging given (a) the current large uncertainty regarding the proportion of asymptomatic carriers (estimated to be in $20-50\%$ range~\cite{Mizumoto2020,Bi2020,Bai2020,Lavezzo2020}) and (b) the variable delay between the contamination and first symptoms, which varies from $1$ to $5$ days ~\cite{Ferretti2020,Sethuraman2020}. 

Testing a large portion of the population at random would allow for a direct measure of the overall prevalence, including the proportion of asymptomatic individuals. Unfortunately, it appears that the production of reactants used in RT-qPCR diagnostic would not meet a demand in regular large-scale population testing~\cite{GG,Pouwels2020}.

In such context of a shortage in reactants and/or of RT-qPCR machines, group testing has received renewed interest. The principle of group testing consists in combining samples from multiple individuals into a single pool that is then tested using a single test-kit. The pool sample is considered to be positive if and only if at least one individual in the group is contaminated. The idea of group testing is not new, with a long history that dates back to 1943~\cite{Dorfman1943} in the context of syphilis detection, see~\cite{Aldridge2019} for a review. Optimal diagnostic strategies include smart-pooling, whereby pools are organised according to lines and columns on a grid --or hypercube-- with overlaps enabling the identification of positive individuals~\cite{Barillot1991,Thierry-Mieg2006,Furon2018}.

Several teams across the world have developed group testing protocols for SARS-CoV-2 infected individuals using RT-qPCR tests. As early as February 2020, pools of 10 have been used over 2740 patients to detect 2 positive patients over the San Francisco Bay in California~\cite{Hogan2020}. A recent publication from Saarland University, Germany, shows that positive sample with a relatively mild viral load from asymptomatic patients could still be detected within pools of 30~\cite{Lohse2020}. Further works  suggest that RT-qPCR viral detection can been achieved in pools with a number of samples ranging from $5$ to $64$ ~\cite{Yelin2020,Ben-Ami2020,Shental2020,Schmidt2020,Mutesa2020,Torres2020,Cabrera2020,Wacharapluesadee2020,Martin,Khodare2020,Wernike2020,Gan2020,Griesemer2020}.

In parallel, the theoretical literature on group testing for SARS-CoV-2 diagnostic is growing at a fast pace~\cite{Hanel2020,GG,Deckert2020,SinnottArmstrong2020,Verdun2020,Narayanan2020,Bilder2020}. Most of the emphasis has been put on the binary (positive or negative) outcome of tests, with  little or no regard on the viral load quantification~\cite{Corman2020}. Moreover, if the possibility of false negatives is sometimes considered, the increase in the rate of false negatives with dilution of samples due to group testing is rarely taken into account~\cite{Furon2018}. 

In this article, we do not address any diagnostic problems, such as the question of determining optimal strategies to provide individual positive diagnostic to a large population using a minimal number of tests. Rather, we propose to evaluate pooling strategies as a tool for the study of epidemiologic questions. In the pooling strategies we discuss below, no individual will be part of two different pools at the same time, so no information on the infection status of any distinguished individual is obtained. 

Here, we instead focus on (i) the measure of the overall prevalence and (ii) on the early detection of contamination in a closed community.

The rest of the article is constructed as follows. We first present a simple protocol for the measure of the prevalence in the population by the use of group testing; we make in Section~\ref{sec:sc} the assumption of \emph{perfect test}, i.e. that the test used is not subject to any false positive or negative, no matter the size of the pool to which the test is applied. In Section~\ref{sec:modeling}, we provide a short description of the RT-qPCR and propose a statistical model for its study, that underlines its limit of detection at small concentrations. Then in Section~\ref{sec:statistical}, we analyse part of the information recovered from the quantified viral charge in patients from the clinical dataset of \cite{Jones2020}; this gives a sense of the viral load infected patients should carry in the general population, therefore the effect of dilution on the rate of false negatives. Finally, we show in Section~\ref{sec:epidemiology} how, using the statistical model and the measure of the error rate discussed above, one can measure the viral prevalence in the general population, or design a protocol allowing an early detection of an epidemic outbreak in a closed vulnerable community (e.g. schools, retirement homes, detention centers).

\section{Measuring prevalence with perfect tests}
\label{sec:sc}

We investigate in this section the measure of the prevalence of the disease in a population using a group testing strategy, under the assumption of \textit{perfect} tests, i.e. with no risks of false negative (or false positive). Our derivation is similar to~\cite{Thompson1962}.

We assume that we have $n$ tests at our disposal. Given $N \in \N$, we sample $nN$ individuals at random in the general population, and organize $n$ pools of $N$ individuals. Each of these pools is then tested using the perfect tests. For all $i \leq n$, we write $X^{(N)}_i = 1$ if the $i$th test is positive (i.e. if and only if at least one of the $N$ individuals in the $i$th pool is infected), and $X^{(N)}_i = 0$ otherwise. We denote by $p$ the (unknown) proportion of infected individuals in the population, then $(X^{(N)}_i, i \leq n)$ forms an independent and identically distributed (i.i.d.) sequence of Bernoulli random variables with parameter $1-(1-p)^N$.
\begin{lemma}
Writing $\bar{X}^{(N)}_n = \frac{1}{n} \sum_{j=1}^N X^{(N)}_j$, the quantity $1 - (1-\bar{X}^{(N)}_n)^{1/N}$ is a strongly consistent and asymptotically normal estimator of $p$. A confidence interval of asymptotic level $1-\alpha$ is
\begin{align}
  \label{eqn:ci}
  &\mathrm{CI}_{1-\alpha}(p)  =  \bigg[ 1 - (1 - \bar{X}^{(N)}_n)^{1/N}  \bigg.  \nonumber \\ 
  &  \bigg. \pm \frac{q_\alpha (1 - \bar{X}^{(N)}_n)^{1/N - 1}{\sqrt{\bar{X}_n^{(N)} (1 - \bar{X}^{(N)}_n)}}}{\sqrt{n} N}  \bigg],
\end{align}
where $q_\alpha$ is the quantile of order $1-\alpha/2$ of the standard Gaussian random variable. \label{lemma1}
\end{lemma}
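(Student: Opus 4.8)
The plan is to combine the strong law of large numbers, the central limit theorem and the delta method, all applied to the i.i.d.\ Bernoulli sequence $(X^{(N)}_i)_{i \le n}$ of common parameter $q := 1-(1-p)^N$, and to close with Slutsky's lemma. First I would note that $\bar{X}^{(N)}_n$ is an empirical mean of bounded i.i.d.\ random variables, so the strong law gives $\bar{X}^{(N)}_n \to q$ almost surely as $n\to\infty$. The function $g : x \mapsto 1-(1-x)^{1/N}$ is continuous on $[0,1]$ and satisfies $g(q) = 1 - \bigl((1-p)^N\bigr)^{1/N} = p$; hence the continuous mapping theorem yields $g(\bar{X}^{(N)}_n)\to p$ almost surely, which is the asserted strong consistency.

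For the asymptotic behaviour, the central limit theorem gives $\sqrt{n}\,(\bar{X}^{(N)}_n - q)\xrightarrow{(d)}\mathcal{N}(0,q(1-q))$, the variance being that of a Bernoulli$(q)$ variable. Since $g$ is differentiable at $q$ with $g'(x) = \tfrac1N (1-x)^{1/N-1}$, and $g'(q)\neq 0$ for $p\in(0,1)$, the delta method gives
\[
  \sqrt{n}\,\bigl(g(\bar{X}^{(N)}_n) - p\bigr)\;\xrightarrow{(d)}\;\mathcal{N}\!\Bigl(0,\ \tfrac{1}{N^2}\,q\,(1-q)^{2/N-1}\Bigr),
\]
after simplifying $(g'(q))^2\,q(1-q) = \tfrac{1}{N^2}(1-q)^{2/N-2}\,q(1-q)$. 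Call this limiting variance $\sigma_p^2$.

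To obtain the stated interval I would replace $\sigma_p^2$ by the plug-in estimator $\widehat\sigma_n^2 := \tfrac{1}{N^2}(1-\bar{X}^{(N)}_n)^{2/N-2}\,\bar{X}^{(N)}_n(1-\bar{X}^{(N)}_n)$, which tends to $\sigma_p^2$ almost surely by the same continuous-mapping argument, and apply Slutsky's lemma to obtain $\sqrt{n}\,(g(\bar{X}^{(N)}_n) - p)/\widehat\sigma_n\xrightarrow{(d)}\mathcal{N}(0,1)$. Inverting the event $\{\,|\sqrt{n}\,(g(\bar{X}^{(N)}_n)-p)| \le q_\alpha\,\widehat\sigma_n\,\}$ then produces exactly $\mathrm{CI}_{1-\alpha}(p)$ of \refn{eqn:ci}, whose half-width is $q_\alpha\,\widehat\sigma_n/\sqrt n = q_\alpha (1-\bar{X}^{(N)}_n)^{1/N-1}\sqrt{\bar{X}^{(N)}_n(1-\bar{X}^{(N)}_n)}/(\sqrt n\,N)$.

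These steps are essentially routine; the only care needed is with degenerate cases. If $p\in\{0,1\}$ the limiting variance vanishes (and $g$ is not differentiable at $q=1$ when $N\ge 2$), so the statement is to be read for $p\in(0,1)$. Moreover, for fixed $n$ the event $\{\bar{X}^{(N)}_n = 1\}$ has probability $q^n>0$, on which both the estimator and $\widehat\sigma_n$ vanish and the interval degenerates to a point; since $q^n\to 0$ when $p<1$ this does not affect the asymptotic confidence level. I expect this boundary bookkeeping to be the only mild obstacle.
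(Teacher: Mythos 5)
Your proposal is correct and follows essentially the same route as the paper's proof: the paper likewise treats $(X^{(N)}_j)$ as a standard Bernoulli model, uses the smoothness of $f^{-1}$ (your delta-method step, with $f(p)=1-(1-p)^N$) and Slutsky's lemma to transfer consistency and asymptotic normality to $1-(1-\bar{X}^{(N)}_n)^{1/N}$ and to justify the plug-in confidence interval. Your variance computation and the remark on the degenerate cases $p\in\{0,1\}$ simply make explicit what the paper leaves implicit.
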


\begin{proof}
Note that $(X^{(N)}_j, j \leq n)$ is a standard Bernoulli model, hence $\bar{X}^{(N)}_n$ is a consistent and asymptotically normal estimator of $f(p) = 1 - (1-p)^N$. Hence, using that $f^{-1}$ is $\mathcal{C}^1$ and Slutsky's lemma, we deduce all the above properties of the estimator $f^{-1}(\bar{X}_n^{(N)}) $ of $p$.
\end{proof}

\begin{remark}
\label{rem:widthEquation}
As $\lim_{n \to \infty} 1 - (1-\bar{X}^{(N)}_n)^{1/N} = p$ almost surely, for any $N \in \N$ the width of the confidence interval defined in Lemma~\ref{lemma1} satisfies
\begin{align}
  \label{eqn:widthEquivalent}
  &\frac{2 q_\alpha (1 - \bar{X}^{(N)}_n)^{1/N - 1}{\sqrt{\bar{X}_n^{(N)} (1 - \bar{X}^{(N)}_n)}}}{\sqrt{n} N}  \nonumber\\
  \underset{n \to \infty}{\sim}&\frac{2 q_\alpha}{\sqrt{n}} \frac{(1-p)}{N} \sqrt{\frac{1- (1-p)^N}{(1-p)^N}} \quad \text{a.s.}
\end{align}
In other words, the precision of the measure of prevalence decays as $n^{-1/2}$, with a prefactor that depend on the prevalence $p$ and the number $N$ of individual per pool. There exists an optimal choice of $N$ that minimizes the value of this prefactor, largely improving the precision of the measure. 
\end{remark}
A classical computation (cf.\@ e.g.\@~\cite{GG}) shows that the prefactor in~\refn{eqn:widthEquivalent} is minimal when the number of mixed samples per pool is equal to:
\begin{equation}
  \label{eqn:nopt}
  N^{\perf}_\text{opt} = -\frac{c_\star}{\log(1-p)} \iff (1-p)^{N^{\perf}_\text{opt}}  \approx 0.20,
\end{equation}
where $c_\star = 2 + W(-2e^{-2})\approx 1.59$ and $W$ is the Lambert $W$ function. Specifically, the size of the pools is optimal when approximately 80\% of the tests made on the groups turn positive, in sharp contrast with the Dorfman criterium \cite{Dorfman1943}.

\begin{figure}[t!]
\centering
\includegraphics[width=8.5cm]{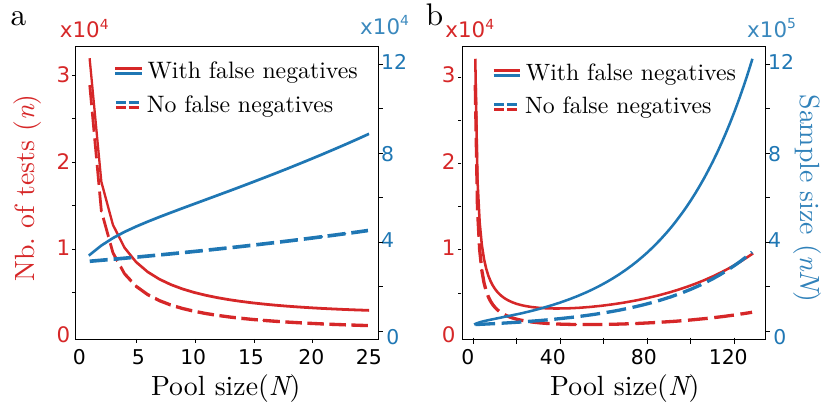}
\caption{(a,b) Total number of tests (red) and total number of sampled individuals (blue) in order to estimate a prevalence of $p = 3\%$ with a $\pm 0.2\%$ precision with $95\%$ confidence interval as a function of the pool size $N$ for the perfect case (dashed lines) considered in Sec.~\ref{sec:sc} with no false negative, and the more realistic case (solid lines) considered in Sec.~\ref{sec:epidemiology} (with false negatives; parameters are defined in Table~\ref{tab:monitoring}). In (a) $N$ ranges from $0$ to $25$; in (b) $N$ ranges from $0$ to $128$; as visible in (b), the valley around the optimal pool size $N^{\perf}_{\mathrm{opt}} \approx 50$ is large: near optimal savings in tests are achieved even for moderately large pool sizes that require smaller number of individuals to sample.}
\label{fig_prevalence}
\end{figure}

If we measure the prevalence of the population using group testing, choosing $N = N^{\perf}_\text{opt}$ for the size of the groups, then measuring with a given precision the prevalence will require significantly less tests than if we were to use one test per sampled individual (i.e. if $N=1$). On the other hand, using this group testing method increases the total number of individuals needed to be sampled, which also has a cost to be considered. However, one can observe that the bottom of the valley of the (red) functions plotted in Fig.~\ref{fig_prevalence}, that represent the number of tests needed as a function of the size of the pool, is rather wide and flat. There is therefore a large variety of quasi-optimal pool sizes that can be chosen with minimal diminution of the precision in the measure of the prevalence.

Taking the exemple of a prevalence at $p = 3\%$, we expect the pool size minimizing the number of tests needed to read $N^{\perf}_\text{opt}=50$, which divides by $20$ the number of tests needed (see Fig. \ref{fig_prevalence}). But to reach similar level of precision than in single testing, the total number of individuals that need to be sampled is more than doubled. Choosing instead a pool size of $N=20$ requires almost the same number of tests, yet at a cost of a $30\%$ in the total number $Nn$ of sampled individuals (c.f. Fig.~\ref{fig_prevalence}). The same observation holds for different values of the prevalence, see SI. Fig.~S1.

\section{Statistical modelling of the PCR}
\label{sec:modeling}

The RT-qPCR technique has been extensively used to estimate the concentration of viral material in samples~\cite{Forootan2017}.
We briefly sketch the main steps of an RT-qPCR diagnostic protocol in Box 1. The qPCR typically returns a $C_t$ value, which corresponds to $-\log_2$ of the initial number of DNA copies in the sample, up to an additive constant and measure error. It is measured as an estimated number of cycles needed for the intensity of the fluorescence of the sample to reach a target value (see Fig.~\ref{fig:PCRdescription}).

Combined measures of two viral RNA strands with a control of a human RNA strand are recommended in order to detect defective sampling that could induce false negatives, but also to improve precision of the measure as well as to normalize the number of virus copies by the quantity of human DNA~\cite{Corman2020}. Such combined measure can also improve precision of the measure as well as to normalize the number of virus copies by the quantity of human. We do not intend to include such features in our model.  Furthermore, we do not model here the possible errors at the reverse transcription stage, which could lead to some biased measure of the viral load distribution

PCR tests are prone to amplify non-specific DNA sequences~\cite{Forootan2017,Ruiz-Villalba2017} that can trigger an onset of fluorescence in a samples with no viral SARS-CoV-2 load. These events, called \textit{artefacts}~\cite{Ruiz-Villalba2017}, will typically occur beyond a relatively large critical number of cycles, thus imposing the following condition on the diagnosis: a reliable positive result can only be made if the $C_t$ value is lower than a critical value, denoted $\dcens$. Here, the onset of fluorescence in which the virus is absent will be modelled as if triggered by a vanishingly small artificial  concentration, denoted $\epsilon_1$.


\subsection{Statistical model for the cycle threshold value for a fixed viral concentration}

We propose to model the number of cycles threshold value $C_t$ as a random variable, denoted by $Y$, that depends on the viral load $c$ in the measured sample as
\begin{equation}
Y = - \log_2\left( c + \epsilon_1 \right) + \epsilon_2,
\label{eq:modelY}
\end{equation} 
where $\epsilon_1$ is the law of the artefact, modelled as a log-normal distribution with parameters $(\nu,\tau^2)$;  $\epsilon_2$ is an intrinsic measurement error on the threshold value $C_t$ measurement, modelled as a centered Gaussian random variable with variance $\rho^2$.

As mentioned above, tests are considered to be reliably positive when $Y \leq \dcens$. To avoid false positives, the threshold $\dcens$ (with cens for censoring) is chosen such that $\P(\epsilon_1> 2^{-\dcens})\ll 1$. Thus, using that as long as $a$ and $b$ are of different orders of magnitude, we have $\log(a+b) \approx \log(\max(a,b))$, we deduce that
\begin{equation}
Y \approx \min(-\log_2(c),\dcens) + \epsilon_2,
\label{eq:assumption}
\end{equation}
which obeys the law of a Gaussian random variable with variance $\rho^2$ and mean $-\log_2(c)$, censored at $\dcens$.

In the idealized no artefact limit ($\epsilon_1 \rightarrow 0$), the PCR threshold intensity of a negative patient ($c=0$) would never be reached ($Y \rightarrow \infty$ as well as $\dcens = \infty$).

\begin{figure}[t!]
\noindent\fcolorbox{black}{blanchedalmond}{
\begin{minipage}{8 cm}
\begin{center}
{\color{blue}{\textbf{Box 1: A brief description of RT-qPCR tests}} \vskip-0.3cm}
\end{center} 
\justify
\noindent    
We very briefly review some of the steps implemented during an RT-qPCR diagnostic procedure~\cite{Corman2020}:
\begin{enumerate}[leftmargin=*]
  \item The sample is treated so that a target RNA sequence (characteristic of the virus) is transcribed into DNA (reverse transcription);
  \item The sample is placed in a PCR machine, which can measure the concentration of DNA of interest in the sample by making it fluorescent;
  \item A reactive is added which approximatively doubles the number of DNA of interest at every cycle, driven by temperature changes;
  \item The time series of the concentration in DNA over time is recorded;  on a linear regression of of the logarithm of the fluorescent signal over time, one deduces an estimate of the viral concentration in the sample from the linear regression value at the origin.
\end{enumerate}
\end{minipage}}
\end{figure}

\subsection{Model of the cycle threshold values for pooled samples}
We now consider what happens when constructing a pooled sample of $N$ samples. For each $i \leq N$, we write $Z_i=1$ if the sample $i$ contains a viral RNA load with concentration $C_i > 0$, and $Z_i=C_i=0$ otherwise. In the rest of the paper, we assume that, in a combined sample created from a homogeneous mixing of the individual samples, the viral concentration reads:
\begin{equation}
C^{(N)} = \frac{1}{N}\sum_{j=1}^N Z_j C_j.
\end{equation}  \label{eqn:melange}
This assumption relies on the fact that contaminated individuals should have a sufficiently high number of viral copies per sample, so that taking a portion $1/N$ of a virus bearing sample brings a fraction $1/N$ of its viral charge. The result of the RT-qPCR measure of a grouped test with $N$ individuals is then given by Eq. \eqref{eq:modelY}, with $c=C^{(N)}$, hence reads
\begin{equation}
  \label{eqn:groupModel}
 Y^{(N)} = \min\left( \log_2 N - \log_2\left( \sum_{j=1}^N Z_j C_j\right), \dcens \right) + \epsilon_2.
\end{equation}
where $(Z_i, i \leq N)$ are i.i.d. Bernoulli random variables whose parameter is the prevalence of the disease in the population;  $(C_i , i \leq N)$ are i.i.d. random variables corresponding to the law of the viral concentration within samples taken from a typical infected individual in the overall population.

Our model Eq. \eqref{eqn:groupModel} is consistent with the experimental result of \cite{Yelin2020} as well as \cite{Gan2020}, whereby linear relations are found between the logarithm of the pool size and the measured $C_t$ that are sufficiently distant from the identified detection threshold.

\begin{remark}
If it were possible to combine samples without dilution (e.g. following the protocol of \cite{Griesemer2020}, whereby the exact same volume of each sample is added to the buffer solution as if the sample were being tested individually), \refn{eqn:groupModel} would then be replaced by
\begin{equation}
    \label{eqn:gmIdeal}
    \tilde{Y}^{(N)}
    = \min\left(-\log_2 \left(\sum_{j=1}^N Z_j C_j\right), \dcens \right) + \epsilon_2,
\end{equation}
in which case, theoretically, pool testing would never loose precision when the pool size increases. That case was treated in Section \ref{sec:sc}. Hoewever, if the dilution effect occurs for pool sizes exceeding a thresold size $K$,  Eq. \eqref{eqn:groupModel} would be replaced by
\begin{multline}
      \bar{Y}^{(N,K)} =  \label{eqn:gmLessIdeal}\\
       \min\left( \log_2(\tfrac{N}{K})_+ - \log_2\left( \sum_{j=1}^N Z_j C_j \right),\dcens \right)  + \epsilon_2;
\end{multline}
where $\log_2(N/K)_+ = 0$ if $N < K$ and $\log_2(N/K))$ otherwise; the analysis would then be similar to what is presented in the rest of the paper, yet with a lower false negative rate. 
\end{remark}

In order to determine the statistics of the measured cycle $Y^{(N)}$ in a group test of $N$ individuals, we need a distribution for the value of $C_j$, the viral distribution of infected individuals in the population; this is the objective of the next section.

\begin{figure}[t!]
\centering
\includegraphics[width=8.5cm]{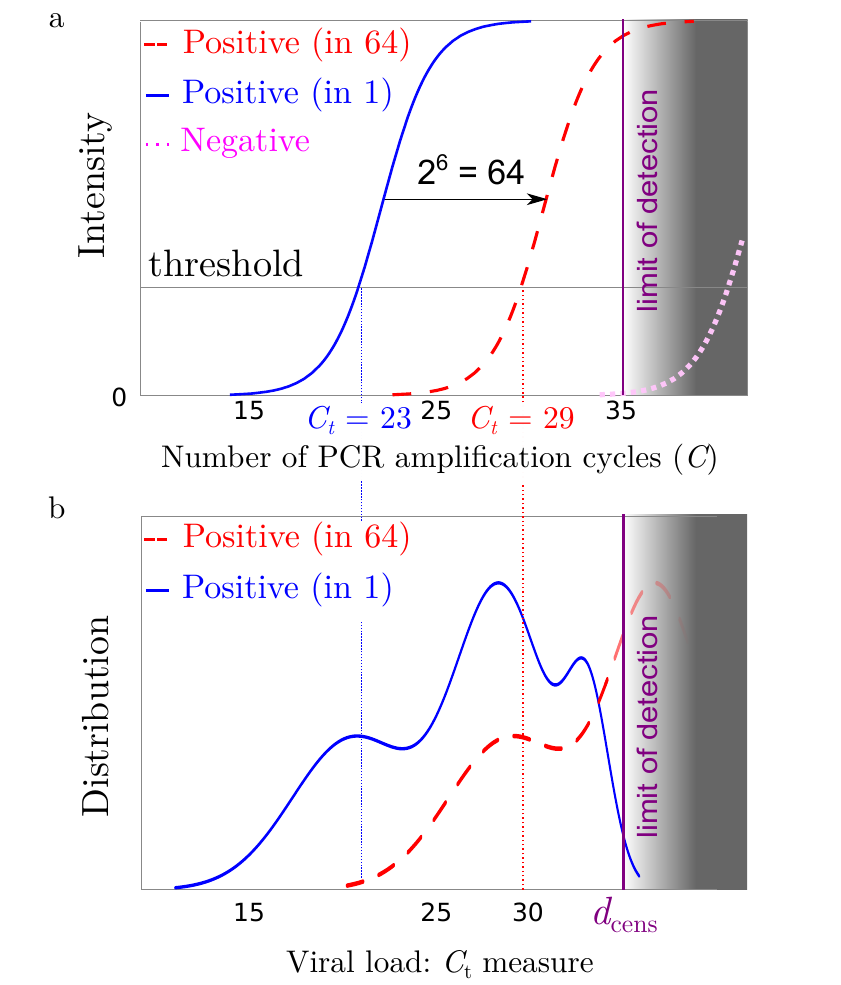}
\caption{(a) Sketch of an RT-qPCR fluorescence intensity signal for a positive patient without pooling (solid red line) a single positive patient in a pool of 64 patients (dashed red curve) and for a negative sample representing the response of an artefact (dotted magenta curve); as pooling dilutes the initial concentration, the pooled response (dashed red curve) is expected to be close to the translation $x \rightarrow x + 6$ from that of a single patient  (solid red line).
(b)~Sketch of the distribution of threshold values for qPCR, either for individual testing (solid blue line) or in pools $64$ (dashed red curve); part of the distribution crosses the limit of detection of the test (figured as the grey area) at the detection threshold $\dcens$.}
\label{fig:PCRdescription}
\end{figure}

\section{Statistical analysis of the viral load measured in positive samples}
\label{sec:statistical}

In~\cite{Jones}, the authors propose an analysis of SARS-CoV-2 viral load by patient age. Their analysis is based on the clinically measured viral load of a series of 3,712 infected patients. In particular, in Fig.~1, an histogram representing the frequency distribution of the viral load estimated using RT-qPCR testing. 
We present in this section a method to analyse these data as a mixture of Gaussian random variables. As RT-qPCR does not allow the measure of viral load below a certain value, we take this fact into account in our statistical modelling by considering the Gaussian in the mixture to be censored after a given threshold $\dcens$.

Unfortunately, as the clinical dataset of~\cite{Jones} is not available,  we created a simulated set of measures which reproduces the distribution described by the histogram in Fig.~1~\cite{Jones}. As the precise distribution of data points within each class of the histogram of Fig.~1~\cite{Jones} is unknown, we assumed that points were distributed uniformly in their class. We verified the robustness of our estimator for several realizations for the simulated data\-sets, which lead to consistent values for our model parameters (see SI Sec. 2. B).

Furthermore, as we expect the measure error $\epsilon_2$ of the qPCR to be small with respect to the width of the histogram classes, we set $\rho = 0$ in the rest of the section.

The histogram of our simulated data is plotted in Fig.~\ref{fig:distribution_viral_load}. It is similar to the histogram~\cite[Fig.~1]{Jones} but with the $x$-axis graduated as $-\log_2$ of the viral concentration (to obtain an estimation of the attended $C_t$ value). We observe the presence of a wide and rather low first bump, centered around the value of 20. Additionally, two taller but less wide bell shapes seem to be present, centered around the values 29 and 34 respectively. After the value of about 35.5, there is a significant drop in the number of registered values, very likely due to a sharp detection limit.

The presence of three well-marked distinct bell shapes in the histogram suggests to model the density of $-\log_2$ of the viral load of contaminated patients as a mixture of Gaussian distributions. Heuristically, the patient population can be divided into several groups, which we call clusters, such that patients in cluster $i$ have a $C_t$ value distributed according to a normal distribution with parameters $(\mu_i,\sigma_i)$. 

In a first paragraph A., we use a classical statistical tool to estimate the number of clusters and their parameters (frequencies, means, variances). 
However, this first Gaussian modelling does not take into account the sudden drop observed for values around $C_t \sim 35.5$, hence the estimated density does not fit well to such drop. This seems to be due to the sudden loss of sensitivity of qPCR measure for low viral load. 

To better represent the variables, we thus introduce a (partially) censored Gaussian model in Section~\ref{Sec:censored}, which, to our knowledge, was not previously proposed.

Finally in Section~\ref{Sec:mixing:censure}, we construct a model consisting of a mixing of censored Gaussian variables, that we apply to the data of~\cite{Jones}. This model fits better the data with stable estimation of the parameters and provides further validation of the previous three-cluster analysis of the tested population.


\begin{figure}[t!]
\centering
\includegraphics[width=8.5cm]{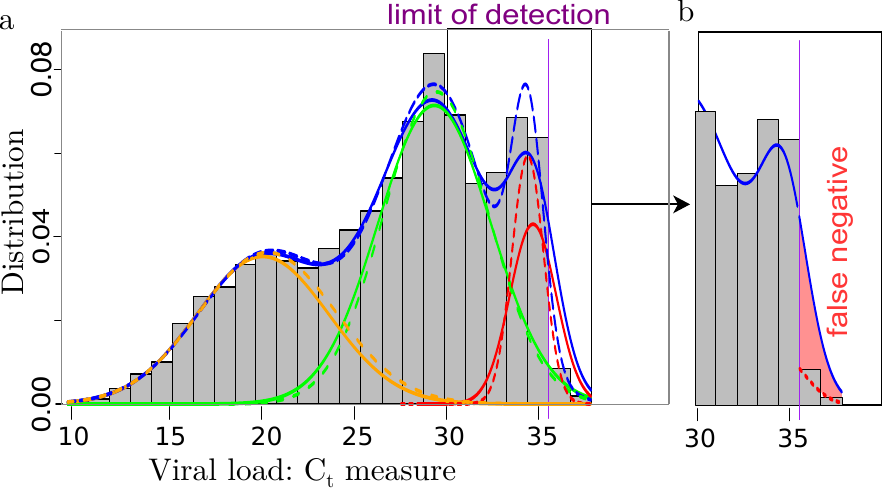}
\caption{(a) Representation of the density for the classical mixing Gaussian model (dashed lines) and the partially censored model (solid lines) each composed as a sum of $3$ components  for the Gaussian model (orange/green/red dashed lines) and the partially censored model (orange/green/red solid lines); (purple vertical line) location of the threshold $\dcens\approx35.6$. (b) Focus on the false negative region, with the estimated false negative probability in the partially censored model (solid line) due to the defect of detection above the threshold $\dcens$ (red color filled area).}
\label{fig:distribution_viral_load}
\label{Fig:Histo:MM_Censure}
\end{figure}

\subsection{Mixture model}\label{sec:Naive:MM}

The shape of the histogram Fig.~\ref{fig:distribution_viral_load} suggests that the law of the viral load is distributed according to a mixture of three or more Gaussian distributions. The typical decomposition into three clusters is represented in dashed lines in Fig.~\ref{fig:distribution_viral_load}(a). We also plotted in SI. Fig.~S4 
the histogram with an example of the estimated density with 3 and 4 clusters.

We observe that the cluster associated to the lowest viral load, i.e. the highest $C_t$ value (red curve in Fig. \ref{fig:distribution_viral_load}) has a small variance.  However, as recalled by~\cite{Jones}, there is a loss of sensibility of the measure for very small viral loads, which can explain the drop in the number of detected cases around $\dcens = 35.6$. This motivates the introduction of the censored Gaussian model in the next section.

\subsection{Censored model and partially censored model}\label{Sec:censored}

To model a partial lack of detection of low viral load ($C_t$ higher than a threshold $\dcens$), we introduce the partially censored Gaussian variable as a building block for the representation of the density of the viral load in infected patients.
However, depending on the machine tuning, the $C_t$ value of the sample can shift by an additive constant, which is computed by measuring the $C_t$ value of a standard solution of viral DNA to tare the measure. Then, some tests might allow the detection of lower viral loads than others.

In view of the shape of the histogram Fig.~\ref{fig:distribution_viral_load}, it is reasonable to assume that the $C_t$ value of patients in a cluster follows a Gaussian distribution. To model this partial censorship phenomenon, we assume that if the sampled $C_t$ value is lower than the detection threshold $\dcens$, the measure is always made for samples detected as positive. If the value is higher than $\dcens$, the sample will be detected with probability $q$, and its measure will be registered. Otherwise, it will be discarded as a (false) negative, with probability $1-q$. The parameter $q$ represents the probability of detection of a viral load that falls below the detection threshold of some PCR measures. 

\begin{remark}
\label{rem:pbCensorship}
The assumption that the probability of detection only depends on whether the $C_t$ value is higher than a fixed threshold is of course an important simplification, as one would expect lower viral loads to be more difficult to detect than higher ones. However, the simplicity of this model allows us to study it as a three parameters statistical model, and to construct simple estimators for these parameters. Additionally, it fits rather well the available data, and fitting a more complicate censorship model would require a lot of measures of $C_t$ values close to the detection threshold $\dcens$.
\end{remark}

We call this statistical model a \textit{partially censored Gaussian model}, denoted by $\mathcal{CN}_{\dcens}(\mu,\sigma,q)$, with $\mu$ and $\sigma$ the mean and standard deviation of the Gaussian variable before censorship and $q$ the detection probability above the threshold. If we denote by $X$ the random variable, $f_{\mu,\sigma}$ (resp. $F_{\mu,\sigma}$) the density (resp. the cumulative distribution function) of a Gaussian law $\mathcal{N}(\mu,\sigma)$ then the density of $X$ is defined for every $x\in\mathbb{R}$ by:
\begin{equation}
f_X(x)=\frac{f_{\mu,\sigma}(x)}{q+(1-q)F_{\mu,\sigma}(\dcens)}\times\left\{\begin{array}{l}
1\text{ if }x\leq\dcens,\\
q\text{ otherwise}.\\
\end{array}\right.
\label{eq:density}
\end{equation}

\begin{remark}
In the absence of censorship (i.e. in the limits $q\rightarrow1$ or $\dcens\rightarrow+\infty$), we check that~\refn{eq:density} converges to a Gaussian density distribution.
\end{remark}

To avoid the problem of modelling of the partial censorship described in Remark~\ref{rem:pbCensorship}, a solution that we implement here as a comparison tool, is \textit{to forget} the values after the threshold and to fit a \textit{completely censored} model (i.e. with $q=0$) to the remaining data, that we denote by $\mathcal{CN}_{\dcens}(\mu,\sigma) = \mathcal{CN}_{\dcens}(\mu,\sigma,0)$, with density defined for every $x\in\mathbb{R}$ by
\begin{equation}
f_X(x)=\frac{f_{\mu,\sigma}(x)}{F_{\mu,\sigma}(\dcens)}\mathds{1}_{\{x\leq \dcens\}},
\label{eq:viral_load_distribution}
\end{equation}
where $\mathds{1}_{\{x\leq \dcens\}}$ is the indicator function equal to $1$ if $x\leq \dcens$, and $0$ otherwise.

Due to the presence of the cumulative distribution function of a Gaussian law in the denominator in the normalization constant, it is not possible to obtain analytical forms of the parameter estimators. Nevertheless, we can estimate the parameters using an optimization algorithm like the \textsf{R} function \textsf{nlm} (available in~\cite{rTest}) which implements a Newton-type algorithm. The following Theorem~\ref{lemma:censure} guarantees the quality of the maximal likelihood estimators, hence of the estimations if the maximization procedure is done correctly.

\begin{theorem}
The estimators $\left(\widehat{\mu},\widehat{\sigma},\widehat{q}\right)$ of $\left(\mu,\sigma,q\right)$ obtained by maximisation of the likelihood ratio are strongly consistent and asymptotically normal.
\label{lemma:censure}
\end{theorem}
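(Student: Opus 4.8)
The plan is to recognise the partially censored Gaussian model as a regular parametric model with independent and identically distributed observations, and to verify the hypotheses of the classical theory of maximum likelihood estimation (Wald's consistency theorem and Cramér's asymptotic normality theorem). The data are $X_1,\dots,X_n$, i.i.d.\@ with density $f_X(\,\cdot\,;\theta)$ given by \refn{eq:density}, where $\theta=(\mu,\sigma,q)$ ranges over the open set $\Theta=\R\times(0,\infty)\times(0,1)$, and we assume the true parameter $\theta_0=(\mu_0,\sigma_0,q_0)$ lies in the interior of $\Theta$ (in particular $0<q_0<1$, so that genuine censoring occurs and the degenerate cases $q_0\in\{0,1\}$ are excluded). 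Note that $f_X(x;\theta)>0$ for every $x\in\R$, so the support of the model does not depend on $\theta$.

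First I would establish identifiability. For $x\le\dcens$ the density $f_X(x;\theta)$ is proportional to the Gaussian density $f_{\mu,\sigma}(x)$, so the restriction of the law to $(-\infty,\dcens]$ determines the pair $(\mu,\sigma)$, because two distinct Gaussian densities have non-proportional restrictions to any interval of positive length; the size of the multiplicative jump of $f_X$ at $\dcens$, equal to $q$, then determines $q$. Next I would record the regularity of the model: $\theta\mapsto\log f_X(x;\theta)$ is of class $\mathcal{C}^\infty$ for each fixed $x$ (the only $x$-dependence of the censoring factor is through the fixed indicator $\ind{x\le\dcens}$, which is constant in $\theta$), and on any compact neighbourhood of $\theta_0$ its first three derivatives in $\theta$ are bounded by fixed polynomials in $|x|$: the $\sigma$-derivatives of $\log f_{\mu,\sigma}$ involve only negative powers of $\sigma$, which stay bounded away from the origin on such a neighbourhood, times polynomials in $x$, while the normalising constant and the logarithm of the censoring factor are smooth and bounded there. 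Since under $\P_{\theta_0}$ the law of $X$ has Gaussian tails on both sides of $\dcens$, these polynomials are $\P_{\theta_0}$-integrable, which supplies the domination needed for differentiating under the expectation, for the central limit theorem for the score, and for the uniform law of large numbers used below. Finally, the Fisher information matrix $I(\theta_0)$ is finite by the same moment bounds and positive definite: below $\dcens$ the scores $\partial_\mu\log f_X$ and $\partial_\sigma\log f_X$ are, up to additive constants, an affine and a quadratic function of $x$ respectively, while $\partial_q\log f_X$ is piecewise constant with distinct values on the two sides of $\dcens$, so no non-trivial linear combination of the three scores can vanish $\P_{\theta_0}$-almost surely.

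For strong consistency I would run the usual Wald argument. Local domination and the uniform law of large numbers give that $\tfrac1n\ell_n(\theta):=\tfrac1n\sum_{i=1}^n\log f_X(X_i;\theta)$ converges $\P_{\theta_0}$-almost surely, uniformly on compact subsets of $\Theta$, to $M(\theta):=\E_{\theta_0}\log f_X(X;\theta)$, which by Jensen's inequality together with identifiability has $\theta_0$ as its unique maximiser. It remains to confine the maximiser to a compact set: as $\sigma\to0$ the log-likelihood tends to $-\infty$ because $n\ge 2$ distinct observations, which occur almost surely, cannot be accommodated by a degenerate Gaussian; as $q\to0$ it tends to $-\infty$ because almost surely some $X_i$ exceeds $\dcens$ and then carries the vanishing factor $q$; and as $|\mu|\to\infty$ or $\sigma\to\infty$ the density is uniformly small over the almost surely bounded range of the sample. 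Hence for $n$ large the maximiser $\widehat\theta_n=(\widehat\mu,\widehat\sigma,\widehat q)$ exists, lies in a fixed compact neighbourhood of $\theta_0$, and by the uniform convergence converges almost surely to $\theta_0$.

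Given strong consistency, asymptotic normality follows from the standard Cramér expansion: $\widehat\theta_n$ is eventually interior, hence solves $\nabla_\theta\ell_n(\widehat\theta_n)=0$; a first-order Taylor expansion of the score around $\theta_0$, the central limit theorem $n^{-1/2}\nabla_\theta\ell_n(\theta_0)\Rightarrow\mathcal{N}(0,I(\theta_0))$, the law of large numbers for the observed information $-n^{-1}\nabla_\theta^2\ell_n(\theta_0)\to I(\theta_0)$ combined with the consistency of $\widehat\theta_n$ and the local domination of the third derivatives to control the remainder, together yield $\sqrt n\,(\widehat\theta_n-\theta_0)\Rightarrow\mathcal{N}(0,I(\theta_0)^{-1})$. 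The one genuinely model-specific point, and the step I expect to require the most care, is the confinement of the maximiser to a compact set, i.e.\@ the behaviour of the likelihood on the non-compact boundary of $\Theta$, especially near $\sigma=0$ and $q=0$; the remaining ingredients are a routine verification of the Cramér/Wald conditions for a smooth, identifiable model with finite and non-degenerate Fisher information.
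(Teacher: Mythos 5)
Your proposal is sound, but it takes a genuinely different route from the paper. The paper's proof (SI, Sec.~B.2) is essentially a one-liner: it rewrites the density \refn{eq:density} in exponential-family form, $f_X(x)=b(\eta)\exp\left[\left<\eta,T(x)\right>\right]$ with natural parameter $\eta=\left(\mu/\sigma^2,\,-1/(2\sigma^2),\,\ln q\right)$ and sufficient statistic $T(x)=\left(x,\,x^2,\,\mathds{1}_{\{x>\dcens\}}\right)$, and then invokes the standard theory of exponential families (Barndorff-Nielsen) to conclude strong consistency and asymptotic normality of the MLE. You instead verify the Wald/Cram\'er program by hand: identifiability from the proportionality of the density to a Gaussian on $(-\infty,\dcens]$ plus the jump of size $q$ at $\dcens$, local domination of derivatives, positive-definiteness of the Fisher information, a uniform LLN, confinement of the maximiser to a compact set, and the usual score expansion. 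Your identifiability and non-degeneracy arguments are exactly the affine independence of the components of $T$, so the two proofs agree on substance; what the exponential-family route buys is precisely the step you flag as delicate: in the natural parametrization the log-likelihood is strictly concave and the MLE is the smooth inverse of the mean-value map applied to $\bar T_n$, so no separate boundary/compactness argument is needed, and consistency plus asymptotic normality follow from the SLLN/CLT for $\bar T_n$ and the delta method. In your version that boundary step is indeed the one needing more care than your case-by-case sketch provides: the claim that the density is uniformly small as $|\mu|\to\infty$ or $\sigma\to\infty$ fails along joint limits in which $q\to0$ and $F_{\mu,\sigma}(\dcens)\to0$ at comparable rates (the normalizer $q+(1-q)F_{\mu,\sigma}(\dcens)$ then also vanishes, and the restricted density can converge to a nondegenerate exponential-type limit on $(-\infty,\dcens]$); the conclusion still holds because, almost surely, some observation exceeds $\dcens$ and its likelihood factor is killed along any such path, but a complete Wald argument must treat these joint degeneracies (e.g.\ via a compactification of the parameter space), which is extra work the paper's algebraic observation makes unnecessary. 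A final minor point worth recording in your approach, and implicit in the paper's, is that the parametrization $q\in(0,1)$ corresponds to the open half-space $\ln q<0$ of the natural parameter space, so the true parameter is interior and the MLE eventually falls in this region.
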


The properties of the maximum likelihood estimators is a consequence of the fact that the (partially) censored Gaussian model belongs to the family of exponential laws (c.f.~\cite[Chapter 9]{stats} and SI~B. 
). To check the quality of the approximation of the estimators by \textsf{nlm}, we simulate variable sizes of samples distributed according to the censored Gaussian model. The values of these estimations are plotted in SI~B.3.

\subsection{Censored mixture model}
\label{Sec:mixing:censure}

We apply here the statistical analysis described in the previous section to simulated data based on the values for the viral load distribution found in~\cite{Jones} with a mixture model and a censoring threshold $\dcens\approx35.6$ (so the two rightmost bars in the histogram of Fig.~\ref{fig:distribution_viral_load}, that appear much smaller than the nearby values, are supposed to be censored). It is reasonable to assume that the censoring threshold has the same value for each sub-population, as it depends on the test methodology rather than on the tested individuals. In Fig.~\ref{fig:distribution_viral_load}, we represent the histogram with the density for the mixture.

We observe that the separation in sub-populations and the resulting densities are very close to the ones obtained in the ``naive'' Gaussian mixture model, constructed without taking into account the detection threshold. The principal difference between the naive and  censored models consists, for the later, in a larger variance that extends above the threshold. To a lesser extent, the sub-population with a median concentration can also exceed the threshold. It is worth mentioning that as expected, the probability of detection below the threshold value is sensibly the same for all three clusters (around 20\%).

As a result, using the computed estimates (see Tab.~III 
) and the model, we can calculate a theoretical false negative rate, see SI. Eq. [S3]: in this case, the value is approximately $3.8\%$ (represented by the red area on the Fig.~\ref{fig:distribution_viral_load}b); it mostly belongs to the third cluster. Besides being based on simulated data, hence subject to caution, this value should be treated as a lower bound, as it is possible that a fourth cluster of infected individuals exists but with very small viral load, below the detection threshold. It allows us to predict for example that at least 150 clinical tests in the series of~\cite{Jones} might have resulted in false negatives due to their low viral load.

To validate the censored model, we can verify that if one (i) erases the data to the right of a certain value and (ii) uses the totally censored model on the remaining data, a similar estimate should be obtained for the parameters. We display in SI. Fig.~S7 
the density obtained using the censored mixture estimation with $\dcens\approx35.6, \ 34.4$ and $33.2$ (removing the first two, the third, then the fourth rightmost bars in the histogram). We observe that the first and second components are globally unchanged. The mean and standard deviation of the last component are almost the same for $\dcens\approx34.4$ and $\dcens\approx35.6$ (see SI. Tab. IV 
); only the proportions naturally decrease with the threshold. On the other hand, the mean moves slightly to the left for $\dcens\approx33.2$; this is due to the fact that we loose the information of the largest bars of this component. It might also be caused by our ignorance of the exact distribution of $C_t$ values within classes of the histogram (we recall that we assume that it is a uniform distribution).

Note that if we were to set the threshold at $\dcens \approx 34.4$ as threshold for the partially censored model without erasing data, the optimization procedure \texttt{nlm} would not converge. This is further indication that a detection drop happens in the neighbourhood of $35.6$.

\subsection{Application to other datasets}
We applied a similar statistical analysis to simulated datasets of 852 infected nursing home resident and workers studied by~\cite{Cabrera2020}. For this dataset, two to three sub-populations are identified by our algorithm. The estimation obtained for the Gaussian fit of the $C_t$ distribution they obtained is given in SI Tab. \ref{tab:MM_Gaussian}. It would be interesting to link these observed sub-populations to characteristics of the individuals (e.g. age of the patients or stage of the disease). We mention that in two other datasets of smaller size~\cite{Yelin2020,Liu2020}, the statistical resolution does not allow us to distinguish between several sub-populations; we rather found that the distribution of $C_t$ corresponds to a single Gaussian with standard deviation $\sigma$ in the $5$ to $6$ range.

\section{Group testing: application to epidemiology}
\label{sec:epidemiology}

We now show how the previous analysis of the tests used to measure the viral load in patients can be used to precise the epidemiological monitoring of the disease in the general population.

\subsection{The number of infected individuals in pools cannot be recovered}

We expect the PCR result to correspond to the sample with the highest viral load, up to a dilution-induced drift $\log_2(N)$, under the model hypothesis of Sec. \ref{sec:modeling} (cf. Fig.~\ref{fig:PCRdescription}). Indeed, since the viral concentration in randomly selected infected individuals spans several order of magnitudes, we expect that
\begin{align}
\log_2\left(\frac{1}{N} \sum_{i=1,\ldots,j} C_i\right) \approx \log_2\left(\max_{i=1,\ldots,j} C_i\right) - \log_2(N), \label{eq:min_viralload}
\end{align}
for $j$ positive samples with concentration $C_j$ diluted in a pool of $N$.

We point out that the RT-qPCR viral load measure could be used to improve efficiency and cross testing of smart pooling type diagnostic methods, which are beyond the scope of this paper. We plan to investigate this aspect in future work.

Therefore, the measured value of the pooled sample viral concentration cannot be used to estimate the number of infected individual within the pool. 

\subsection{Group testing and the measure of viral prevalence}
Here, in contrast with Sec.~\ref{sec:sc}, we no longer consider that the RT-qPCR tests to be perfect. As discussed in \refn{eqn:groupModel}, we model the concentration of the pooled sample as the average of the individual sample loads; and we assume that viral concentration becomes undetectable below a given threshold. Therefore, creating groups has the effect of increasing the false negative rate, which has to be quantified. We then use this estimation to un-bias the estimator of the prevalence in the overall population based on group testing, and study its impact on the optimal choice of group sizes.

\subsubsection{Estimation of the false negative rate induced by pooling}

The distribution of the viral load of a single positive sample within a pool of several negative samples appears as shifted towards higher $C_t$-values, see Fig. \ref{fig:PCRdescription}. A pooled sample returns positive only if the average concentration is smaller than $\dcens$; thus using the observation of Sec.~\ref{Sec:mixing:censure}, contamination will be detected in a group of $N$ individuals typically if at least one individual in the group has a viral charge larger than $N 2^{-\dcens}$. Therefore, there is a risk that low viral charge samples (that would have been tested positive using individual tests) would no longer be positive in pool tests. Similarly to \refn{eq:assumption}, we express the increased rate of false negative due to pooling as $\P(- \log_2(C) + \epsilon_2  \geq \dcens - \log_2(N)), $
where $\log_2(C)$ is the viral concentration of the positive individual. For simplicity we neglect the measurement error of the qPCR, i.e. considering that $\rho = 0$, thus an expression for the increased rate of false negatives reads $1-\Phi(\dcens^{(N)})$, where
\begin{equation}
  \Phi\left(\dcens^{(N)}\right) = \P\left[- \log_2(C)  \leq \dcens - \log_2(N)\right].
  \label{eq:falsenegativerisk}
\end{equation}
We find that, when estimated by the censored model, the false negative risk function $\Phi(\dcens^{(N)})$ grows quicker as the pool size increases than in the uncensored model, see Fig.~\ref{fig_falsenegativerisks}. This is partly due to the fact that the censored model makes the assumption that $\dcens \approx 35.6$, whereas $\dcens \approx 37.3$ in the uncensored model. Choosing a correct statistical model for the distribution of $C_t$ values has a critical impact on the estimation of the false negative risk, therefore on the estimate of the prevalence, as is discussed next.

%
%
%
%
%
%
\begin{figure}[t!]
\centering
\includegraphics[width=8.5cm]{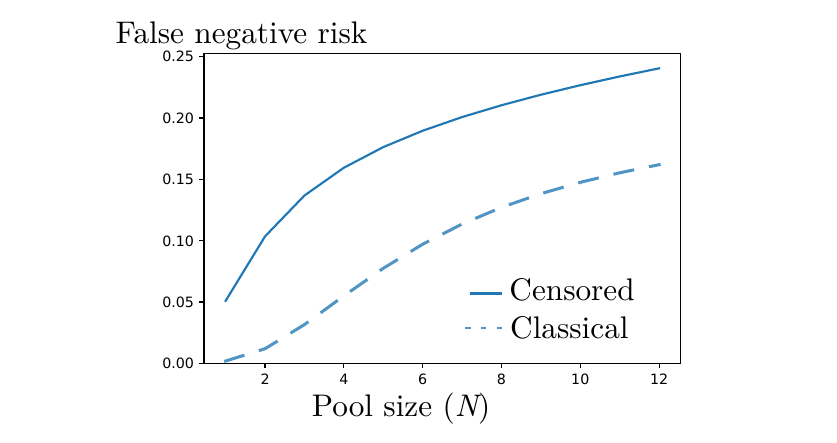}
\caption{Estimated false negative risk  probability of a sample pooling test containing a single infected individual (with a viral load distributed according to the simulated statistics presented in Sec. \ref{sec:statistical}) as a function of the number of pooled individuals $N$, using the estimated density obtained for the classical (dashed line) and censored (solid line). The censored and classical model give different estimates of (and probably both underestimate, as we did not consider unsuccessful sampling) the false negative rate in group pooling under the model assumption of Eq. \eqref{eqn:melange}; such estimates will have impact on the estimation of the prevalence.}
\label{fig_falsenegativerisks}
\end{figure}

\subsubsection{Correction of the prevalence estimate by the false negative rate}

Assuming a false negative rate of $1 - \Phi(\dcens^{(N)})$ in pool testing with groups of size $N$, we observe that $1 - (1 - \bar{X}^{(N)}_n)^{1/N}$ (as defined  using the notation of Section~\ref{sec:sc}) is a consistent estimator of $p \Phi(\dcens^{(N)})$ (c.f. Lemma~\ref{lemma1}). As a result, the confidence interval constructed for the prevalence $p$ now reads
\begin{align}
  \mathrm{CI}_{1-\alpha} &(p) =  \left[ \frac{1 - (1 - \bar{X}^{(N)}_n)^{1/N}}{\Phi(\dcens^{(N)})} \right. \nonumber \\
  & \left. \pm \frac{q_\alpha}{\sqrt{n}} \frac{(1 - \bar{X}^{(N)}_n)^{1/N - 1}{\sqrt{\bar{X}_n^{(N)} \left(1 - \bar{X}^{(N)}_n\right)}}}{N\Phi(\dcens^{(N)})} \right].
  \label{eq:confidenceinterval}
\end{align}

For the numerical applications presented in  Fig.~\ref{fig_prevalence} and Table~\ref{tab:title}, we consider a viral load $C$ that is distributed according to ~\refn{eq:viral_load_distribution}. As expected, due to false negatives, we find that the number of tests needed to reach a given precision on the prevalence is increased, but relatively moderately. 
In particular, the optimal pool size value, $N^{\imp}_\text{opt}$, that minimizes the number of tests needed to reach a given precision level, is close to the value $N^{\perf}_\text{opt}$, defined in \refn{eqn:nopt}. 

\begin {table}[t!]
{
\begin{center}
\caption {Table of the pool size as a function of the number of tests for a prevalence of $3\%$ measured with a precision of $0.2\%$ at a $95\%$ confidence interval, for both perfect tests (with no false negatives, see Sec.~\ref{sec:sc}) and imperfect tests (with false negatives; model parameters defined in Table~\ref{tab:monitoring}); computed using Eqs.~[\ref{eqn:ci}] and [\ref{eq:confidenceinterval}].} \label{tab:title} 
\begin{tabularx}{8.5cm}{|m{1cm}|X|X|X|X|}
\hline
Pool & \multicolumn{2}{c|}{Perfect tests}& \multicolumn{2}{c|}{Imperfect tests} \\
\cline{2-5}
size $N$&Number of tests $n$& Sample size $n N$&Number of tests $n$& Sample size  $n N$\\
\hline
\hline
1&{29100}&29100&{29464}&29464\\\hline
2&{14775}&29550&{15069}&30138\\\hline
3&{10003}&30009&{10261}&30783\\\hline
5&{6191}&30955&{6411}&32055\\\hline
10&{3350}&33500&{3530}&35300\\\hline
20&{1973}&39460&{2130}&42600\\\hline
30&{1561}&46830&{1716}&51480\\\hline
50&{1349}&67450&{1525}&76250\\\hline
100&{1884}&188400&{2235}&223500\\\hline
200&{10378}&2075600&{13105}&2621000\\\hline
\end{tabularx}
\end{center}
}
\end {table}

\begin{figure}[t!]
\noindent\fcolorbox{black}{blanchedalmond}{
\begin{minipage}{8cm}
\begin{center}
{\color{blue}{\textbf{Box 2: A protocol of prevalence determination}} \vskip-0.3cm}
\end{center} 
\justify
\noindent   
We propose the following procedure for the measure of prevalence via group testing:
\begin{enumerate}[leftmargin=*]
  \item Start from an a priori estimate for the prevalence ($\hat{p}_0$).
   \item Based on the value of $\hat{p}_0$, estimate the number $N$ of individuals in the pool that minimizes the total number of tests needed to achieve the estimation of the prevalence $p$ at the targeted precision and confidence interval.
  \item Construct a number of $n$ pools containing each $N$ individuals selected at random in the general population, with $n$ the number of tests available for the measure.
  \item Count the number of positive tests and compute the average  $\bar{X}_n^{(N)}$.
  \item An improved estimate of the prevalence then reads: $\hat{p}_1 = 1 - (1 - \bar{X}^{(N)}_n)^{1/N}$ (cf Lemma~\ref{lemma1}).
\end{enumerate}
Note that this method can easily be adapted into a Bayesian algorithm, with the number $N$ of individuals tested modified at each iteration of the procedure.
\end{minipage}}
\label{box2}
\end{figure}

Similarly, one can observe that using a different distribution with similar mean and variance for $-\log_2 C$ as~\refn{eq:viral_load_distribution} would lead to moderate changes of the values estimated in Table~\ref{tab:title}. While modelling of the viral load of an infected individual is crucial to un-bias the estimator of the prevalence via group testing, the practical implementation of such group testing strategy, i.e. the choice of the group size $N$ and the number $n$ of tests to use, is relatively independent of the precise statistical properties of the viral load distribution. 

Based on \refn{eq:confidenceinterval},  in Box~2. we propose an iterative method to estimate $p$, which, during a survey, allows for  on-the-fly adaptations of the pool size. 

\subsection{Group testing and Bayesian inference of the prevalence in sub-categories of the population}

The viral prevalence may vary significantly among specific categories within the overall population. In particular, a prevalence reaching $5\%$ was measured among the health care workers population in a hospital~\cite{Barrett2020}, which we expect to be significantly higher than the estimate prevalence within the general population.

Here we show that we do not specifically need pool samples from individuals from homogeneous categories in order to recover the distribution of prevalence within these categories. 

The protocol described in Box~2 can be adapted to study different prevalences in specific sub-populations, provided that the number of individuals of each subpopulation is known for every grouped sample. In SI. Fig.~\ref{fig:Bayesian_twopopulationsSI}
, we evaluate, as function of the number of tests, the credibility intervals on the prevalence within two categories of the population: one at $p_1 = 5\%$ representing $20\%$ of the total population (a value inspired by \cite{Barrett2020}), the other being at $p_2 = 0.5\%$ (a value inspired by \cite{Gudbjartsson2020}). More information on this adaptative protocol can be found in the SI. 

\begin{remark}
Note that once a difference in prevalence is noted from the epidemiological study of the general population, testing can be adapted to construct groups containing only members of one subpopulation to attain similar levels of precision for the prevalence of the sub-populations. The prevalence in the general population can then be recovered by averaging the estimators of the sub-populations. The advantage of these adaptative settings is that the existence of a difference of prevalence in populations can be tested before deployment of resources needed to measure them specifically.
\end{remark}


\subsection{Group testing for regular monitoring in communities}

We now consider some applications of group testing to the early detection of an epidemic outbreak within a community, that is interconnected and reasonably closed to the outside (e.g. schools, nursing homes, detention centres). 


Our results are based on a minimal working model for epidemic outbreaks within a community of $A$ individuals. At a random date, which we choose to be time $t = 0$, the patient $0$ in the population gets contaminated, and immediately starts infecting members of its community at rate $\lambda$. Each newly infected individual then contributes to spreading the disease at the same rate $\lambda$. 

\subsubsection{Optimization of the size of pools} \label{sec:poolsize} We first consider the impact of group testing strategy, consisting in $k$ group test with pools of $N$ individuals, on the early time of the outbreak $t \ll \lambda^{-1}$. With a unique contaminated individual in the population, the detection probability reads
\begin{equation}
 \P\left[\mathrm{+} \lvert k\ \mathrm{tests}\right] = k N \Phi_0(\dcens^{(N)})/A, \quad \mathrm{with}  \ k N \leq A,
 \label{eq:testcommunities}
\end{equation}
where $\Phi_0(\dcens^{(N)})$ is defined according to \refn{eq:falsenegativerisk}, with the difference that the assumed viral load of the patient $0$, corresponding to that measured at early times, may need not be equal to the distribution estimated in \refn{eq:viral_load_distribution} based on clinical data. For simplicity, we will assume in the following that $\Phi_0$ is the cumulative distribution of a  log-normal viral load distribution $\mathrm{logN}(\mu_0,\sigma_0)$ of mean $\mu_0$ and variance $\sigma_0$. 

In Fig.~\ref{fig_detection}, we represent the evolution of the probability to detect the patient $0$ as a function of the total number of sampled individuals in a population of size $A = 120$. We observe that if $\mu_0$ is close enough to $\dcens$, i.e. if the viral charge of the patient 0 is close to being undetectable, then there will exist an optimal size for the pools. When $N$ becomes too large the risk of false negative overcomes the potential benefits of testing larger portions of the community (see Fig.~\ref{fig_detection}a). In contrast, if the viral load of patient $0$ is slightly higher, the detection probability becomes a monotonic function of the pool size $N$, indicating that larger pools are always beneficial. Additionally, if using multiple tests increases the detection probability when the viral load is close to the detection threshold, using multiple tests has a smaller impact when the viral load gets easier to detect.

\begin{figure}[t!]
\centering
\includegraphics[width=8.5cm]{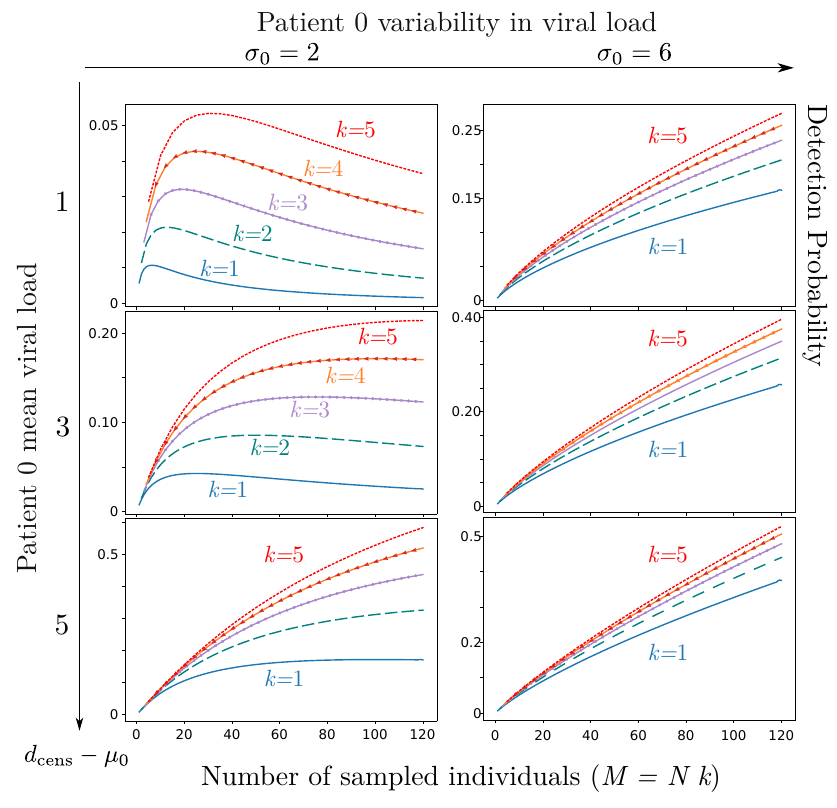}
\caption{Detection probability of a single patient $0$ with low viral load within a community of $120$ as a function of the total number of sampled individuals $M = k \times N$, where $k$ is the total number of tests used and $N$ the number of samples pooled together in a test, with $k=5$ (red dotted line); $k=4$ (orange line with arrow), $k=3$ (purple line with circles); $k=2$ (dashed green line); $k=1$ (solid blue line) for several values in the parameters describing the viral load of the patient $0$ at the onset of contagiosity, expressed in terms of a normal distribution in $C_t$ (the number of RT-qPCR amplification cycles) with a standard deviation $\sigma_0$  and a mean $\mu_0$ and a threshold at a value denoted $d_{\mathrm{cens}}$ satisfying: $\mu_0 =d_{\mathrm{cens}} -1$ (top row), modelling a patient 0 with a very low viral concentration, $\mu_0 = d_{\mathrm{cens}}-3$  (middle row), $\mu_0 = d_{\mathrm{cens}} - 5$ (bottom row); $\sigma_0 = 2$ (left column); $\sigma_0 = 6$ (right column). }
\label{fig_detection}
\end{figure}

\begin {table}[ht!]
{
\begin{center}
\caption {Table with standard parameter values (with std. the abbreviation of standard deviation).} \label{tab:monitoring} 
\begin{tabular}{|l|l|l|}
\hline
\text{Symbol} & \text{Meaning} & \text{Value}\\
\hline
\hline
$\dcens$ & Maximal cycle number 			 	& $35.6$ \\
$\mu_i,\sigma_i,p_i$ 	& Viral load (in $C_t$) distribution fits			& Table~IV 
 \\
$\rho$ 				& PCR measurement error (std.)			& $0$   \\
\hline
\hline
$\phi$ 				& Delay before onset of symptoms 		& $5 \, \mathrm{days}$ \\
$\lambda$ 		& Intra-community contamination rate 		 		& $0.5 \, \mathrm{days}^{-1}$ \\
$r$ 					& Asymptomatic probability 					& $40\, \%$\\
$\tau$ 				& Time interval between grouped  tests & $1-12 \, \mathrm{days}$\\
$A$ 					& Total number in the community			& $120$ or $1000$  \\
$N$ 					& Pool size 											& $1$--$128$ \\
$\mu_0$ ; $\sigma_0$ 	& Viral load (in $C_t$) (mean, std.)	& $30-35$  \\
$\zeta$ 	& Defective sampling probability	& $0$    \\
\hline
\end{tabular}
\end{center}
}
\end {table}

\subsubsection{Optimization of the regularity of tests}


We now consider the impact of a regular testing strategy every $\tau$ units of time, with $k$ pools of $N$ individuals with individuals drawn at random in the population. To compare the impact of the frequency of testing for a constant budget, we assume that the ratio $k/\tau$ is equal to a fixed constant, representing the amount of tests spent per unit of time by the community to detect infections. What we consider here is a continuous-time steady-state version of the previous paragraph \ref{sec:poolsize}.

We denote by $r$ the proportion of asymptomatic individuals. Symptomatic individuals start showing signs of being contaminated a given number of days after infection, denoted by $\phi$ (that we assume to be constant in this simplified model). The outbreak is detected either if one of the screening tests finds a contaminated individual, or if one of the contaminated individuals shows symptoms. We denote by $T_s$ the epidemic detection time without screening tests (as defined as the first detection of a symptomatic individual) and $T_d$ is the one through random tests screenings  time, see Fig. \ref{fig_regularity}.


In this model, the number of infected individuals at the date $t$, denoted by $N(t)$, is distributed according to a geometric random variable with parameter $1-e^{-\lambda t}$, as expected for such a Yule process~\cite{Meleard2016}. In the absence of screening tests, we find that the average detection time reads $\langle T_s \rangle = \phi - \log (1-r)/\lambda$. The average number of contaminated individuals at that time then reads: $\left\langle N(T_s ) \right\rangle  = e^{\lambda\phi}/(1-r)$. In comparison, based on~\refn{eq:testcommunities}, we find that the first screening test after contamination will detect the outbreak with a probability 
\begin{equation}
\P\left[\mathrm{+} \lvert k\ \mathrm{tests}\right] = \frac{k (e^{\lambda \tau}-1) \Phi(\dcens^{(N)})}{\lambda \tau A}.
 \label{eq:continuous_time}
\end{equation}
By the time of detection $t=\tau$, a number of  $\left\langle N(\tau) \right\rangle  =(e^{\lambda \tau} - 1)/\lambda \tau$ individuals has been contaminated.

\begin{figure}[t!]
\centering
\includegraphics[width=8.5cm]{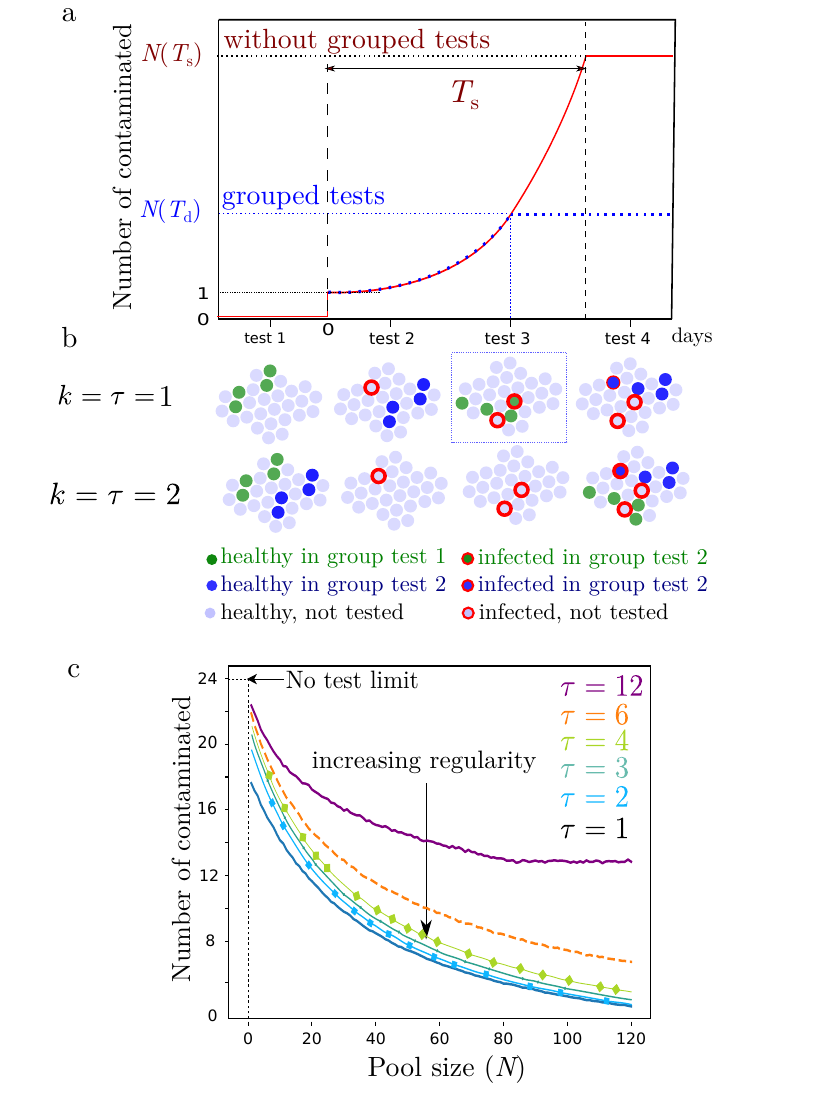}
\caption{(a) Sketch of the time evolution of the number of contaminated individuals in a community. The patient $0$ is contaminated from the outside of the community 
$0.8$ units of time after a test date. In the absence of screening tests, the contamination is detected at the time $T=T_s$ (after appearance of the first symptoms); with grouped tests, an infected individual is detected at a time $T=T_d$.
(b) Sketch of two group testing strategies, here with pools of size $N=4$, one with a single ($k = 1$) grouped tests every day ($\tau = 1$); the other with $k = 2$ grouped tests every second day ($\tau = 2$); the second strategy (least frequent testing) fails to detect the outbreak early and results in more contamination. (c) Number of infected individuals at the detection of the outbreak as a function of the pool size, using $k=\tau$ tests performed at $\tau$-day intervals, with $\tau = 12$ (solid purple line), $\tau = 6$ (dashed orange line), $\tau = 4$ (dark green solid line with square), $\tau = 3$ (light green solid line with circles), $\tau = 2$ (cyan line with squares) and $\tau = 1$ (solid blue line). Here we consider a large community composed of $A = 1000$ individuals.  The patient 0 has a viral load concentration distributed according to a log-normal distribution with mean $\mu_0 = 30$ and standard deviation $\sigma_0 = 2 \log_2 (2)$; all others parameters can be found in Table II.}
\label{fig_regularity}
\end{figure}

Computing the average number of contaminated individuals $\langle N(T_d) \rangle$, we find that a screening strategy consisting in sampling a random subgroup of the community as frequently as possible is more efficient than the one consisting in testing larger portion of the community at less frequent time intervals. In Fig.~\ref{fig_regularity}, we compare different screening scenarios for a large community composed of $A = 1000$ individuals. We vary the value of the screening time interval $\tau$ while keeping fixed (1) the average number of tests per unit of time and (2) the size of the pools on which each test is used. Our simulation range from checking $N$ individuals every day (with one test) to checking $12 \times N$ individuals in $12$ pools every $12$ days. 

\subsubsection{Discussion}
In Fig. \ref{fig_regularity}, we chose to consider the case of presymptomatic patient 0 with a weak viral load ($\mu_0 \approx 30$); we find that a pooling method remains efficient for monitoring purposes even in such unfavorable settings. The recent study \cite{Arons2020} indicates that a large fraction of presymptomatic individuals detected in a nursing homes had relatively high viral loads (in the $\mu_0 \approx 20-25$ range), which tends to indicate that screening methods based on pooling would be even more efficient than suggested in Figs. \ref{fig_detection} and \ref{fig_regularity}.


We also point out that, here, we have used a minimal set of parameters in order for analytical calculations to be tractable. Including more parameters (e.g. considering a time-dependent infection rate or viral charge for patients after their contamination, graph of relationship within the community) would be needed in order to obtain conclusive results to be used as healthcare guidelines. In this direction, based on stochastic simulations encompassing a large set of parameters, \cite{Smith2020} also concludes on the efficiency of group testing in preventing epidemic outbreaks in health care structures. We hope our analytical calculations will inspire further epidemiological investigations. Without such further studies, our results should not  be used literally as a guideline for clinical practice/healthcare related behaviour.

\section*{Conclusion}


We consider the effect of sample dilution in  RT-qPCR grouped tests and we propose a model to describe the risk of false negatives as a function of the pool size. We present a procedure to analyse experimental datasets for the viral charge of patients. Inspired by the clinical study~\cite{Jones}, we expect the statistics of the number of amplification cycles to be well described as a mixture of 3 Gaussian variables censored at the RT-qPCR sensibility limit. Our analysis may hint at the existence of 3 sub-classes of interest that could each be interpreted in terms of medical or physiological criteria~\cite{Liu2020}. 

We point out that the viral load distribution that we determine here based on the clinical sampling presented in~\cite{Jones} is possibly biased as compared to the viral load that would be measured within a population targeted by a group testing strategy. Indeed, asymptomatic individuals may exhibit a different viral load distribution than the population that got tested. Furthermore, the measures of~\cite{Jones} were based on nasal swabs samples, and their statistical distribution is likely to vary according to the sampling method chosen for group testing.

Tests based on saliva samples appear to show promising results in terms of false negative risks while improving the acceptance of tests among the population, decreasing the time needed for sample collection and reducing the exposure of health care practitioners~\cite{Wyllie2020,Azzi2020,To2020,Williams2020,Khurshid2020}. In this context, we are looking forward to seeing whether group testing on saliva samples would provide reliable results;  statistics on the distribution of the viral load in saliva samples would then be of interest to evaluate the efficiency of a group testing strategy based on such samples.

We think group testing could provide the means for regular and massive screenings allowing the early detection of asymptomatic and pre-symptomatic individuals -- a particularly crucial task to succeed in the containment and potential eradication of the epidemic~\cite{Treibel2020,Kissler2020, Ferretti2020}.

\section*{Acknowledgments} We wish to thank members of the MODCOV initiative and in particular Françoise Praz and Florence Debarre who gave us numerous helpful comments, including on the first version of this manuscript. We thank Philippe Hupe for his critical reading. We also thank Marie-Claude Potier and Marc Sanson for insightful discussions on  RT-qPCR tests as well as on the interest of saliva samples.


%


\pagebreak

\clearpage

\onecolumngrid

\appendix

{\begin{center} {\normalsize{\textbf{Supplementary Information}}} \end{center}}
{\begin{center} {\normalsize{\titlename}} \end{center}}
{\begin{center} {\normalsize{Vincent Brault, Bastien Mallein, Jean-Francois Rupprecht}} \end{center}}

\setcounter{figure}{0}
\renewcommand{\thefigure}{S\arabic{figure}}
\setcounter{equation}{0}
\renewcommand{\theequation}{S\arabic{equation}}

\section{Ideal tests}

We present here some of results obtained from the computations made in Sec.~I, where we assumed perfect group testing and used it to measure prevalence in the population. Note that with a perfect test, the question of early detection of an outbreak in a community becomes much simpler : one just need to test everyone at regular time intervals with a single test.

\subsection{Number of tests and sample size as function of the population prevalence}

We trace here, for various values of the prevalence, the number of tests and total number of samples needed to archive a given precision for the confidence interval.

\begin{figure}[ht!]
\centering
\includegraphics[width=10cm]{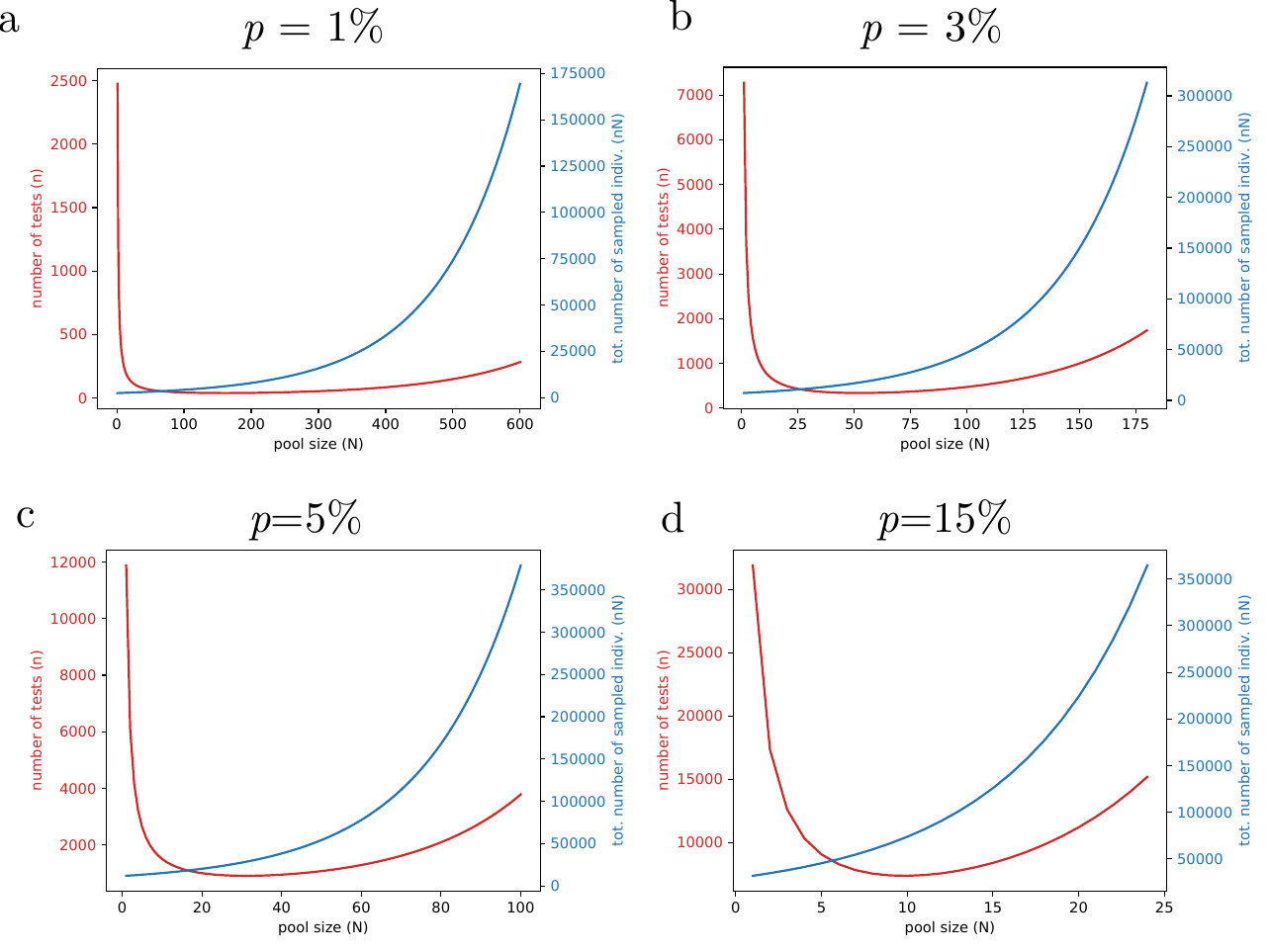}
\caption{Total number of tests and sampled individuals so that the width of the $95\%$ confidence interval is smaller than $0.4\%$ as a function of the pool size $N$ chosen for a perfect test, for a prevalence $p$ equal to $p =1\%$ (a), $p =3\%$ (b), $p =5\%$ (c), $p =15\%$ (d).}
\label{SI:fig_prevalence}
\end{figure}

\subsection{Bayesian inference}
\label{subsec:si:bi}

We are now interested in a Bayesian approach to the measure of prevalence. We started with an initial prior distribution with density $f_0(p) = 6p(1-p) \ind{0 \leq p \leq 1}$ for the prevalence, and for each new test $j$ we do the following:
\begin{enumerate}
  \item take the the mean value $\bar{p}_{j-1} = \int_0^1 p f_{j-1}(p)\mathrm{d} p$ of the prior;
  \item choose the size $N_{j}$ of the pool of the $j$th test computed as (cf.~\refn{eqn:nopt}): 
\begin{align}
N_{j} = \floor{-\frac{c_\star}{\log(1-\bar{p}_{j-1})}};
\end{align}
  \item choose $N_j$ individuals at random and test them in a group:
  \begin{itemize}
    \item if the test is positive, then $f_j(p) = C^+_j (1 - (1-p)^{N_j}) f_{j-1}(p)$;
    \item if the test is negative, then $f_j(p) = C^-_j (1-p)^{N_j} f_{j-1}(p)$;
  \end{itemize}
  with $C_j^\pm$ normalizing constants, chosen such that $\int_0^1 f_j(p) \mathrm{d} p = 1$.
\end{enumerate}
We trace in Fig.~\ref{fig:Bayesian_twopopulations} the result in blue of this experiment, the $95\%$ credible interval being $[a_j,b_j]$, with $a_{j}$ being the $2.5\%$th quantile of $f_j$ and $b_j$ its $97.5\%$ quantile.

Simultaneously to this statistical experiment, one can follow the prevalence in sub-populations of interest. For example, if we assume the population consists of two sub-populations $1$ and $2$ with different prevalences $p_1$ and $p_2$. Starting with a prior distribution $f_j(p_1,p_2) \mathrm{d}p_1\mathrm{d}p_2$ for these prevalences, if a group consisting of $a$ individuals of the first sub-population and $b$ individuals of the second population is sampled positive, then Bayes rules gives $C_{j+1}^+ (1 - (1-p_1)^a(1-p_2)^b)$ for the updated law of $(p_1,p_2)$. A similar update is made if the test is negative. As a result, we get estimates for the prevalence in each sub-population at the same time as we are measuring the prevalence in the overall population.

We test the above statistical experiment on a population  which is composed of two sub-po\-pu\-la\-tions, one large subpopulation of sparsely exposed individuals (prevalence $0.5\%$, representing $4/5$th of the whole population), and a smaller subpopulation of very exposed individuals (prevalence $5\%$). At each step, we choose the size of the pool according to the available estimate for the prevalence in the complete population. The composition of the pool in terms of individuals of each sub-population is chosen at random (at the $j$th step, there are $\mathrm{Ber}(N_j,0.8)$ individuals of the first sub-population). We also update our estimation of the prevalences $(p_1,p_2)$ in each of the two sub-populations.

The results are traced in Fig.~\ref{fig:Bayesian_twopopulations} in orange and green curves. One can see that the width of the credibility intervals of the sub-populations decay much slower than for the whole population. The reason is that the size of the groups are optimized to measure as precisely as possible the mean value $p$.

\begin{figure}[t!]
\centering
\includegraphics[width=8.5cm]{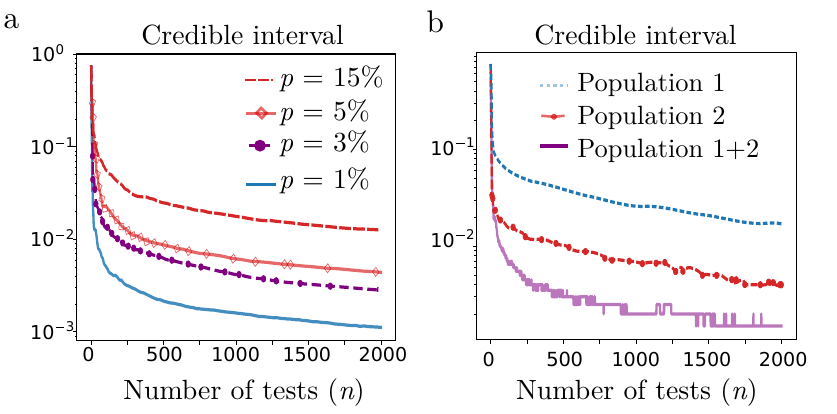}
\caption{(a) Width of the $95\%$ credible interval on the prevalance $p$ with adaptative Bayesian sampling as a function of the number of tests $n$ for a set of values in the prevalence ranging from $p=15\%$ (top, magenta dashed line) to $p=1\%$ (bottom, blue solid line).
(b) Width of the credible intervals in a two-category mixed population for the prevalence either: in the general population (magenta solid line); for the less exposed population $1$ with a prevalence of $0.5\%$, representing $80\%$ of the general population  (blue dashed line); for the more at-risk population $2$ with a prevalence of $5\%$ representing $20\%$ of the general population (red dotted line with circles). }
\label{fig:Bayesian_twopopulations}
\end{figure}

\begin{figure}[ht!]
\centering
\includegraphics[width=10cm]{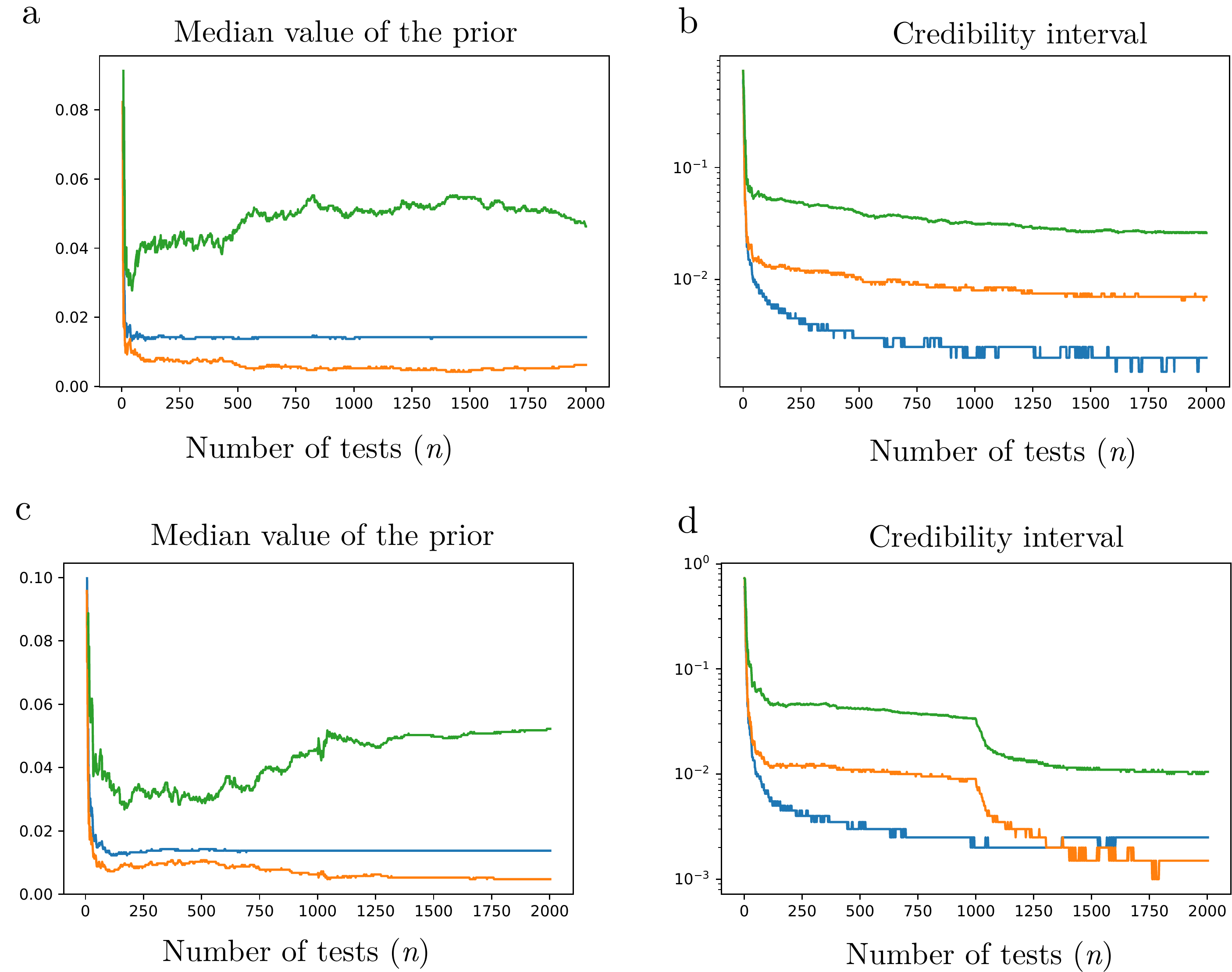}
\caption{(a-b) Bayesian estimation of the parameters of a mixed population, consisting of $80\%$ individuals of type $1$ with a prevalence of $0.5\%$ and $20\%$ individuals of type $2$ with a prevalence of $5\%$. Pooled samples are constituted by sampling randomly individuals from the two sub-populations, with a size optimized for the speed of convergence of the overall prevalence of $1.4\%$. (a-c) Median value of the priors, overall population in blue, first resp second population in orange resp green. (b-d) Width of the $95\%$ credible intervals.
In (c,d), the first 1000 tests are made on groups whose size is optimized to estimate the prevalence of the overall population, the next 1000 tests are divided into two groups that are used on homogeneous sets of the sub-populations, in groups optimized to estimate the prevalences within these sub-populations. This has the effect of drastically improving the speed of convergence of the estimator of the prevalence in the subpopulations.
}
\label{fig:Bayesian_twopopulationsSI}
\end{figure}

However, observe that even with a naive group construction (without segregating individuals according to their sub-population), one can extract information on the prevalence of the sub-populations of interest. Therefore, a design for the measure of the prevalence in a stratified population could be the following: in a first time, pool testing is implemented on randomly constructed group of individuals from the general population. Data is then analysed to detect sub-populations with different prevalences (e.g. according to geography, age, occupation, ...). In a second time, once sub-populations of interest are identified, pool testing is applied to each of the sub-populations independently. We implemented this method if Fig.~\ref{fig:Bayesian_twopopulationsSI}, with the same number of tests a much more detailed estimate of the prevalence is obtained.


%

\newpage

\section{Censored Gaussians}

In this section, we present the simple mixing models of Sec.~\ref{sec:statistical}, as well as some complementary graphs and the estimations obtained for the parameters of this models.

\subsection{Naive method based on mixing models}

In this section, we trace the density estimated by a simple mixture of Gaussian variable presented in Sec.~\ref{sec:Naive:MM}. An estimation of the parameters of this mixture are given in Table~\ref{tab:MM_Gaussian}.

\begin{figure}[ht!]
\begin{tabular}{cc}
\begin{minipage}[c]{0.45\linewidth}
\includegraphics[width=\linewidth]{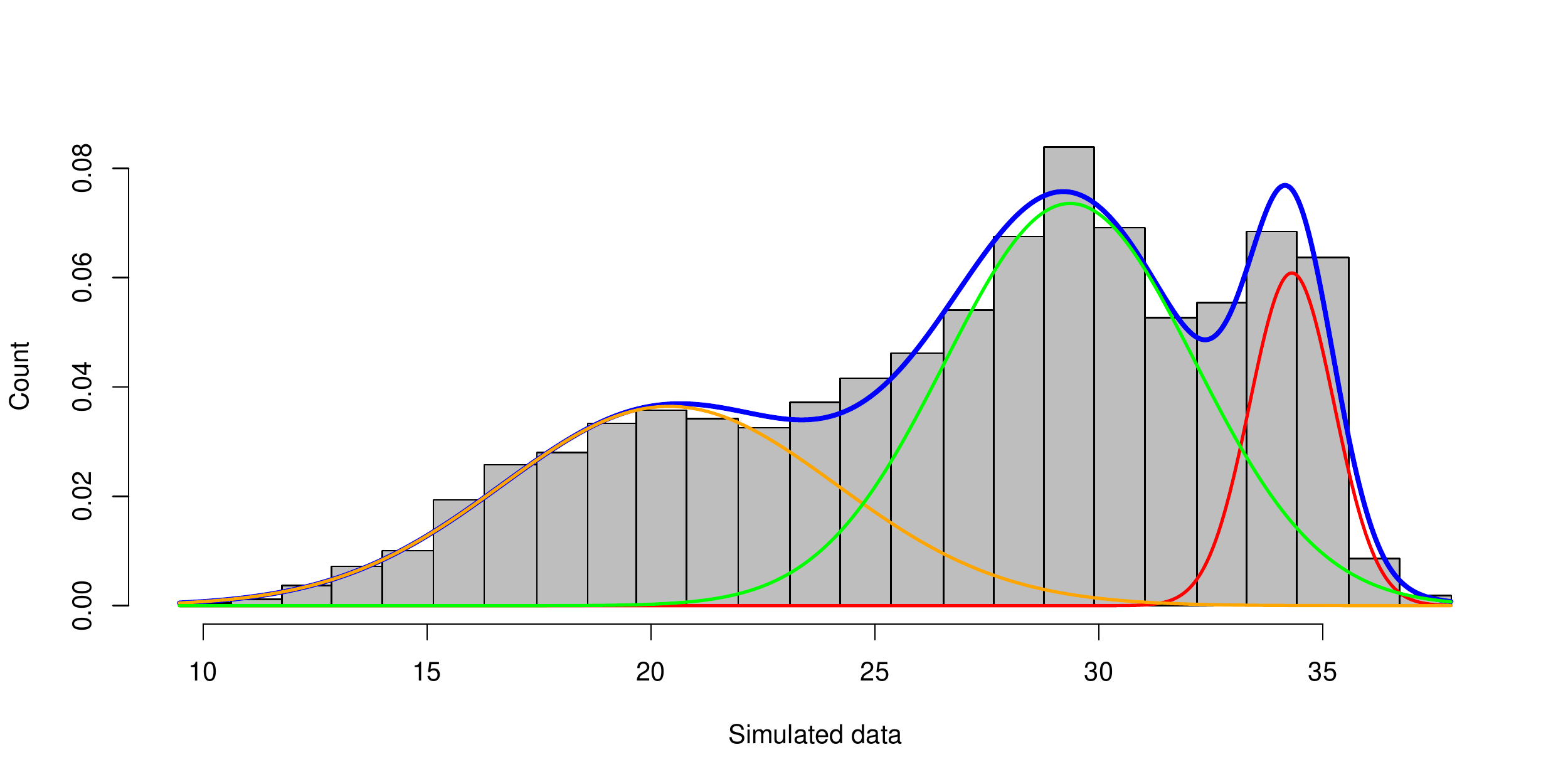}
\end{minipage}&
\begin{minipage}[c]{0.45\linewidth}
\includegraphics[width=\linewidth]{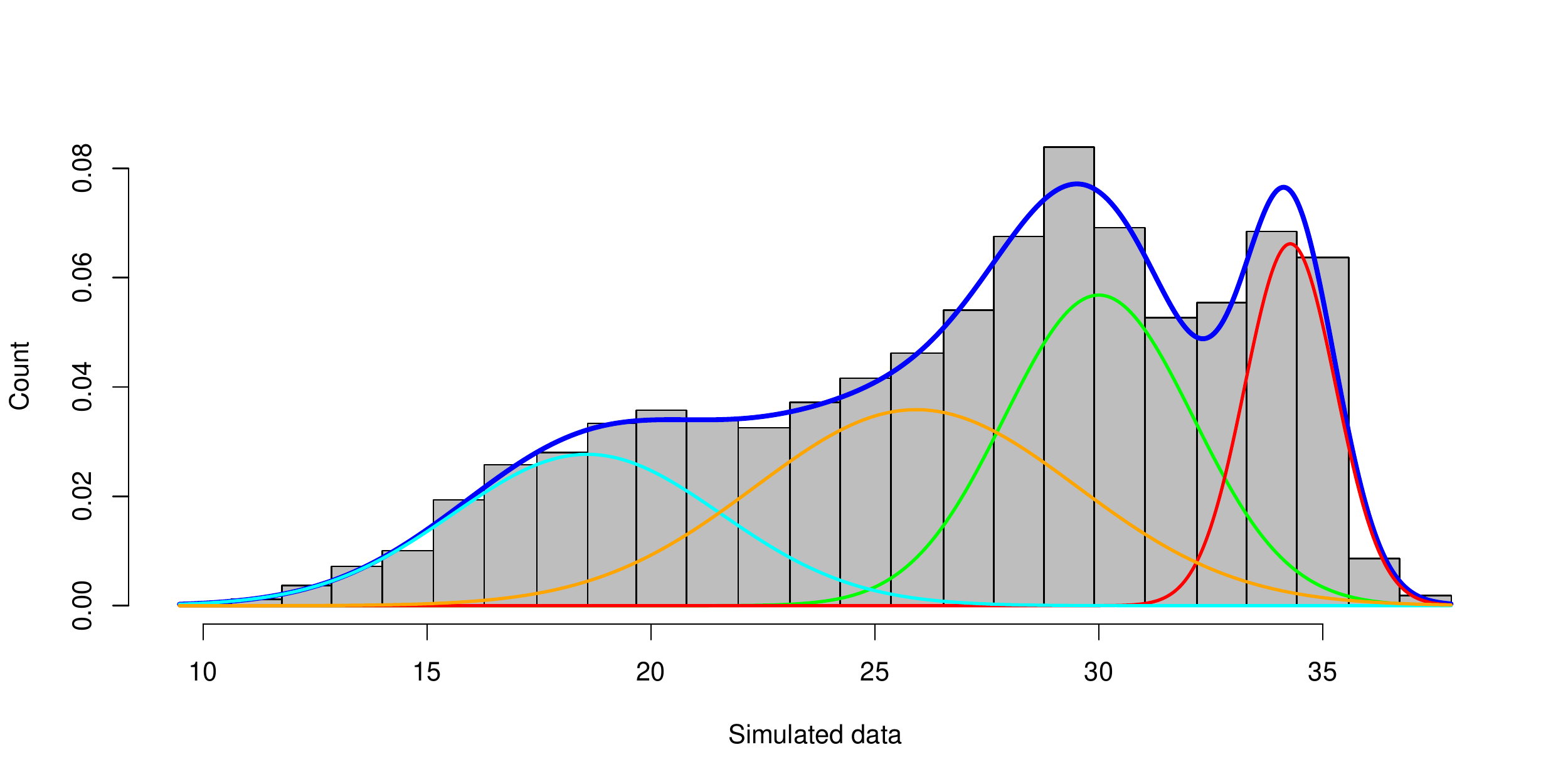}
\end{minipage}
\end{tabular}
\caption{\label{Fig:Histo:MM}Representation of the histogram with the densities estimated with 3 classes (on the left) and 4 classes (on the right): the color lines (other than blue) represent the density of each component and the blue line the density of the mixture.}
\end{figure}

As we do not have access to raw data, we performed simulations to  generate a reconstructed datasets with consistent histograms to Fig. 1 from \cite{Jones}, with randomized position of the points within each class. We applied the above procedure to 100 independently reconstructed data, in order to limit the influence of the random part. Among these 100 simulations, we obtain 95 times 3 clusters and 5 times 4 clusters. When there are 3 clusters, the estimation of the parameters is very stable (standard deviation less than $0.03$ for each) but there is a little more variability in the case of 4 clusters in particular for the two classes with the largest averages (but the standard deviation does not exceed $0.25$).

\subsection{Proof of the Theorem~\ref{lemma:censure}}
\label{sec:proofCensure}

To prove the lemma~\ref{lemma:censure}, we observe that for every $x\in\mathbb{R}$ we have the following decomposition of the density $f_X$ if $q>0$:

\begin{equation}\label{Eq:SI:Decomp}
f_X(x)=b\left(\mathbf{\eta}\right)\exp\left[\left<\mathbf{\eta},T(x)\right>\right],
\end{equation}
with $<\cdot,\cdot>$ is the scalar product,
$\mathbf{\eta}=\begin{pmatrix}
\frac{\mu}{\sigma^2},&-\frac{1}{2\sigma^2},&\ln q\\
\end{pmatrix}$ the natural parameters, $$T(x)=\begin{pmatrix}
x,&x^2,&\mathds{1}_{\{x>\dcens\}}\\
\end{pmatrix},$$
the sufficient statistics and 
$$b\left(\mathbf{\eta}\right)=\frac{1}{q+(1-q)F_{\mu,\sigma}(s)}\times\frac{1}{\sqrt{2\pi\sigma^2}}\exp\left(-\frac{\mu^2}{2\sigma^2}\right).$$ 

For the totally censured model, we have the same decomposition with the third parameters and taking $q=0$.
Thanks the decomposition in~\refn{Eq:SI:Decomp}, the (partially) censured model belongs to the family of exponential laws and the maximum likelihood estimators are strongly consistent and asymptotically normal.

\subsection{Simulations}\label{sec:SI:Simulations}

To study the quality of the estimators defined in Sec.~\ref{Sec:censored}, we simulated $10^4$ samples of size $n\in\{10^2,10^3,10^4,10^5\}$ of variables following the model $\mathcal{CN}_{\dcens}\left(0,1,p\right)$ with $\dcens\in\{-2,-1,0,1,2,3\}$ and $q\in\{0,0.1,0.5,0.9\}$. We provide boxplots estimations of the parameters in Fig.~\ref{Fig:boxplots} and a zoom on significant part in Fig.~\ref{Fig:boxplots:zoom}. Note that these parameters ($\mu = 0, |\dcens| \leq 3$) are very different from the ones expected for $C_t$ values, but the model can be straightforwardly adapted by an affine transformation to measured parameters of interest.

Observe from Fig.~\ref{Fig:boxplots} that the estimations are generally close to the parameters but we can sometimes have very large deviations. We find that the more $n$ increases, the better the estimator. The threshold seems to have a weak influence on the estimation of the partially censored model but, for the fully censored model, we see that the more $\dcens$ increases and the more the quality of the estimators increases; especially when $\dcens=-2$ which represents approximately the 2.3\% quantile. Note that we observe large deviations in the partially censored model when $\dcens$ is equal to~2; this may seem counter-intuitive since we have access to around 97.7\% of uncensored Gaussian information. However, this leaves few observations for the estimation of $p$ (which we observe on the graphs of the last line) and this weakens in this case the model because censorship no longer really has any reason to be. We therefore recommend using the model only when the number of observations after censorship is sufficient to estimate the parameter $p$.

\begin{figure}[t]
\begin{center}
\begin{tabular}{c||c|c|c}
Completely&\multicolumn{3}{|c}{Partially}\\
\cline{2-4}
$p=0$&$q=0.1$&$q=0.5$&$q=0.9$\\
\hline
&&&\\
\begin{minipage}[c]{0.225\linewidth}
\includegraphics[width=\linewidth]{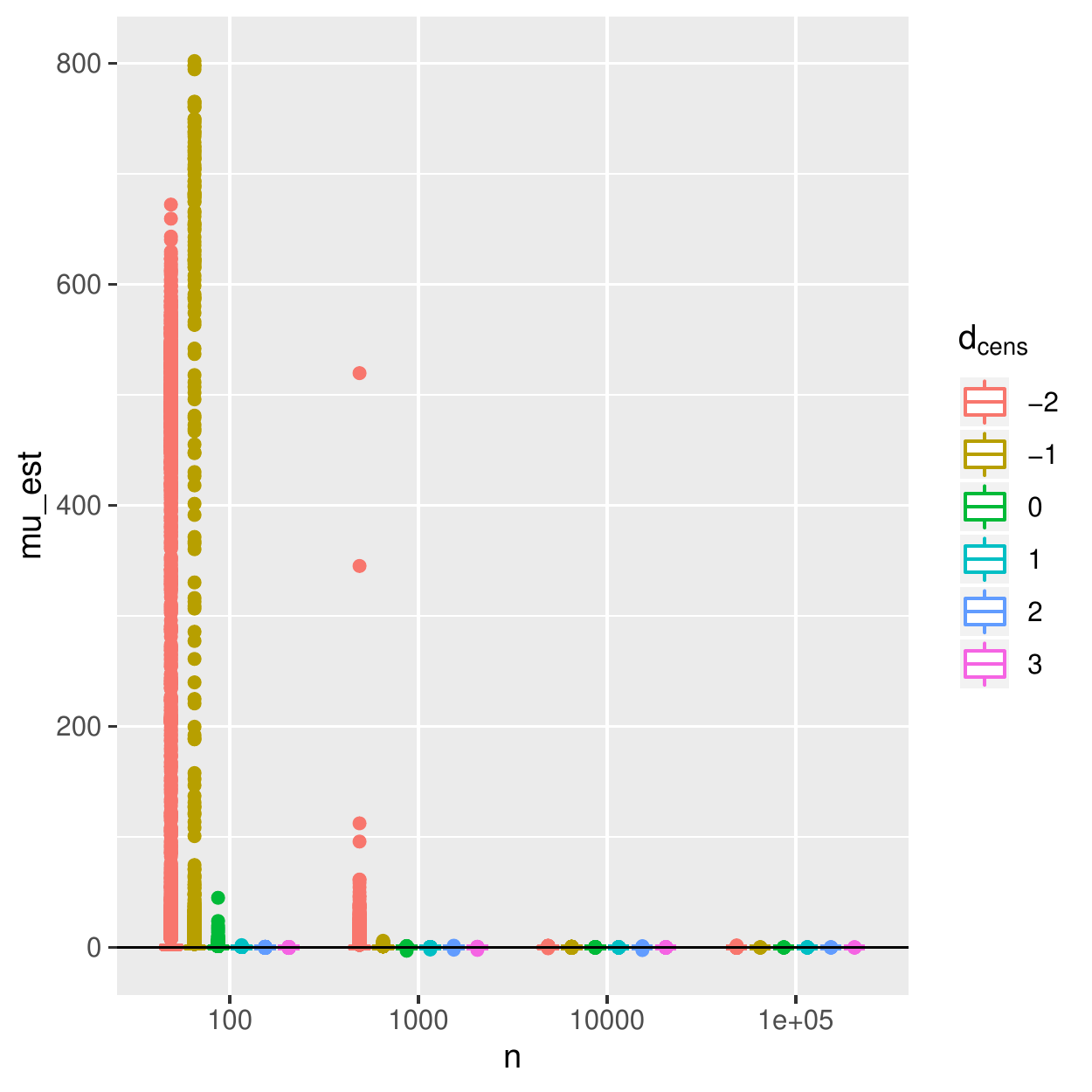}
\end{minipage}&
\begin{minipage}[c]{0.225\linewidth}
\includegraphics[width=\linewidth]{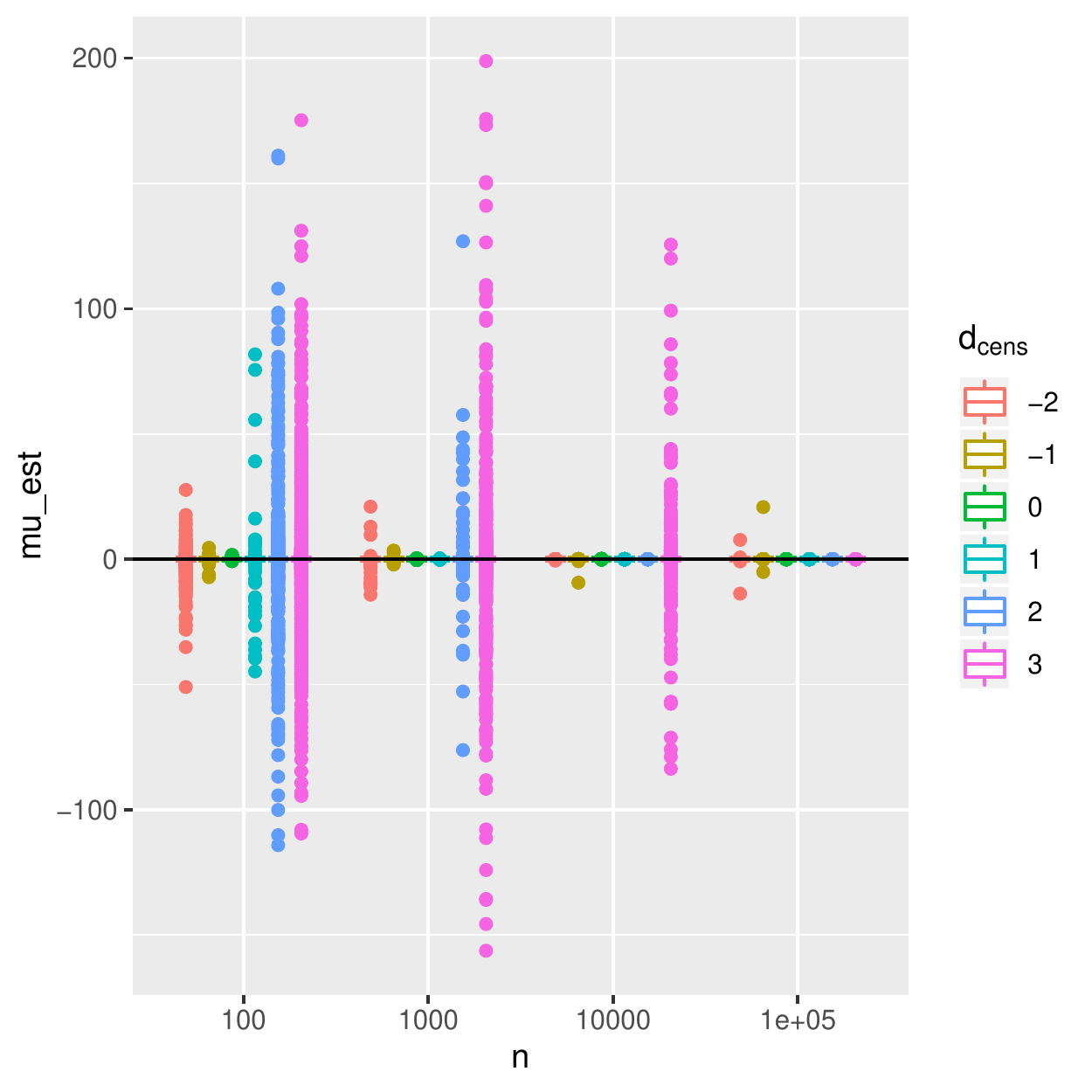}
\end{minipage}&
\begin{minipage}[c]{0.225\linewidth}
\includegraphics[width=\linewidth]{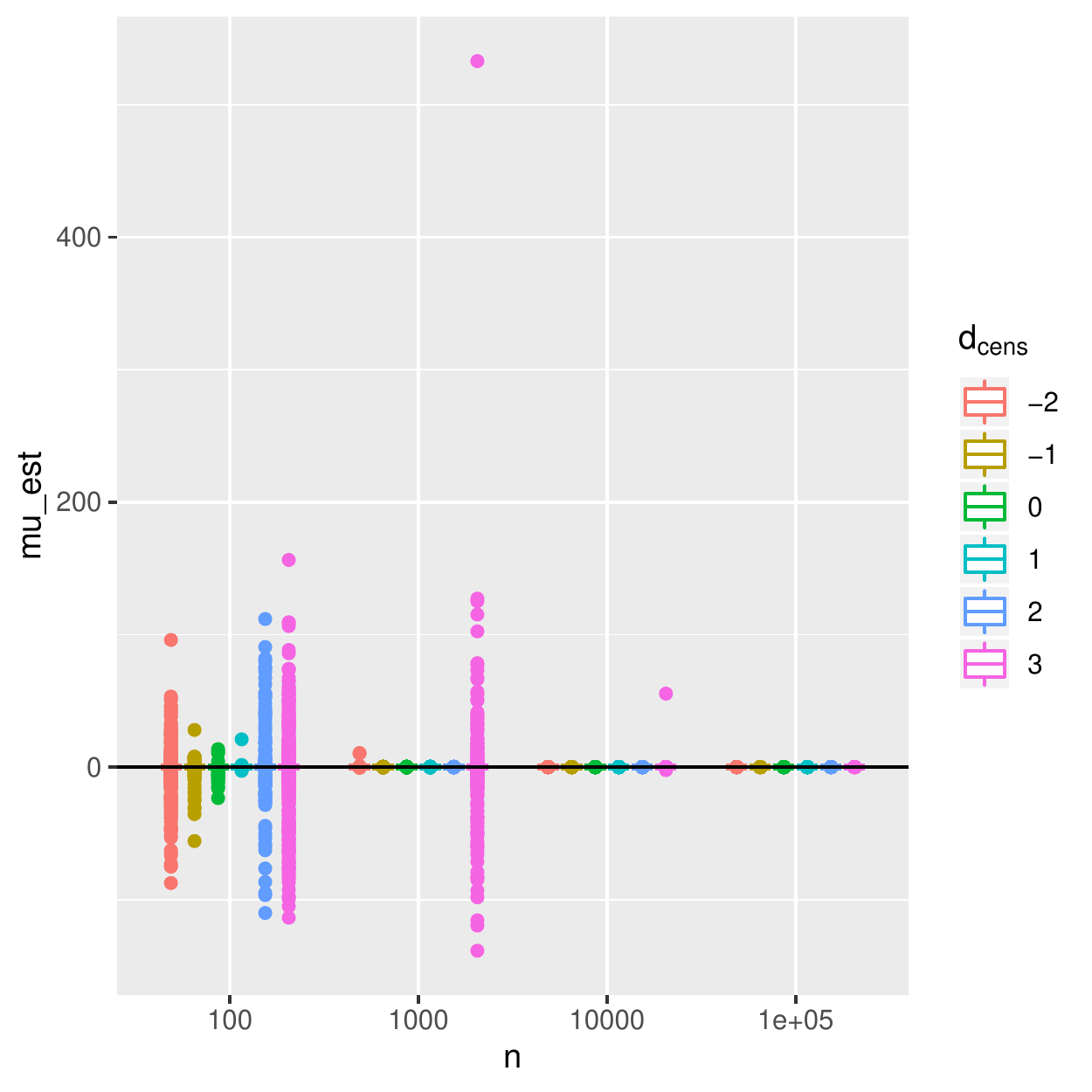}
\end{minipage}&
\begin{minipage}[c]{0.225\linewidth}
\includegraphics[width=\linewidth]{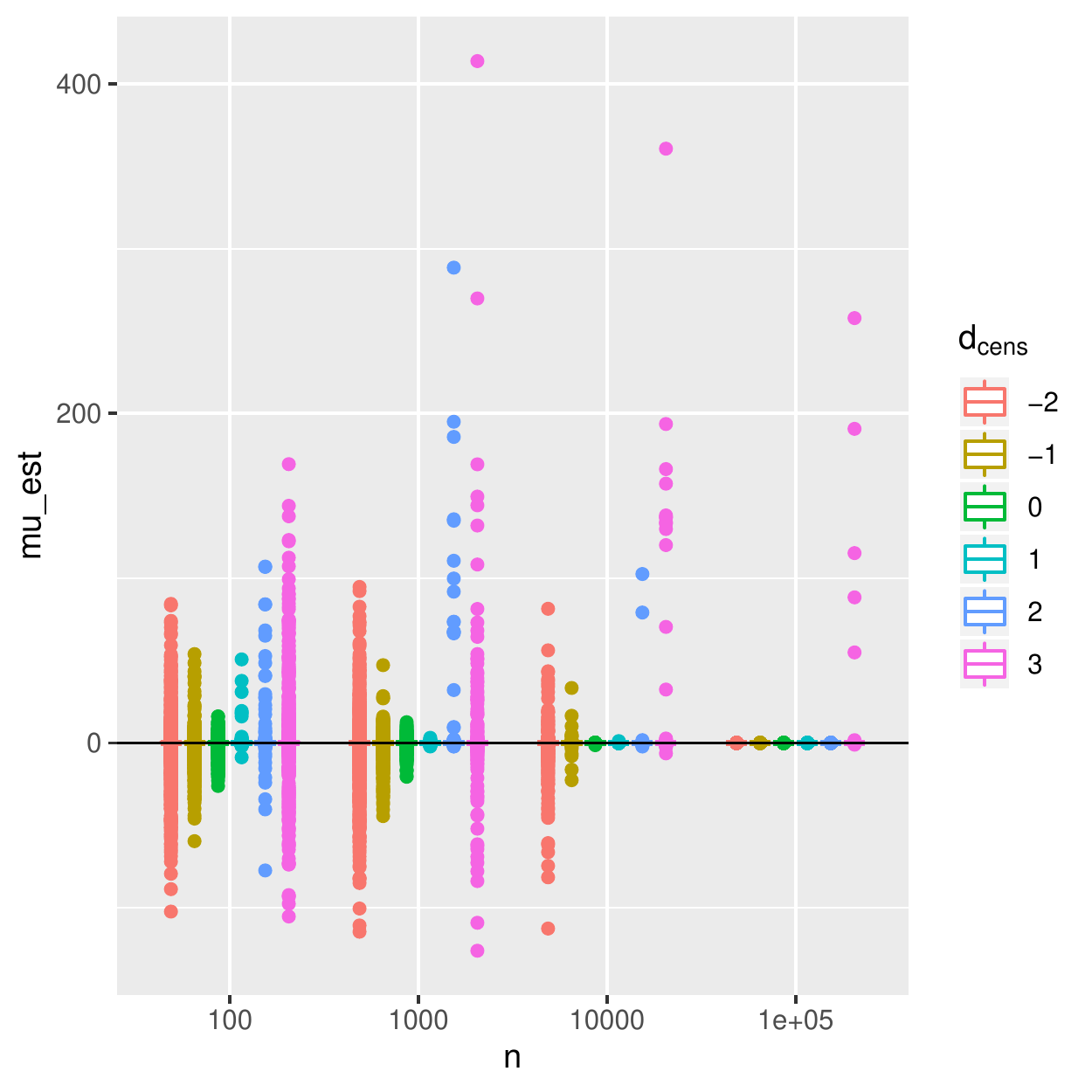}
\end{minipage}\\
&&&\\
\hline
&&&\\
\begin{minipage}[c]{0.225\linewidth}
\includegraphics[width=\linewidth]{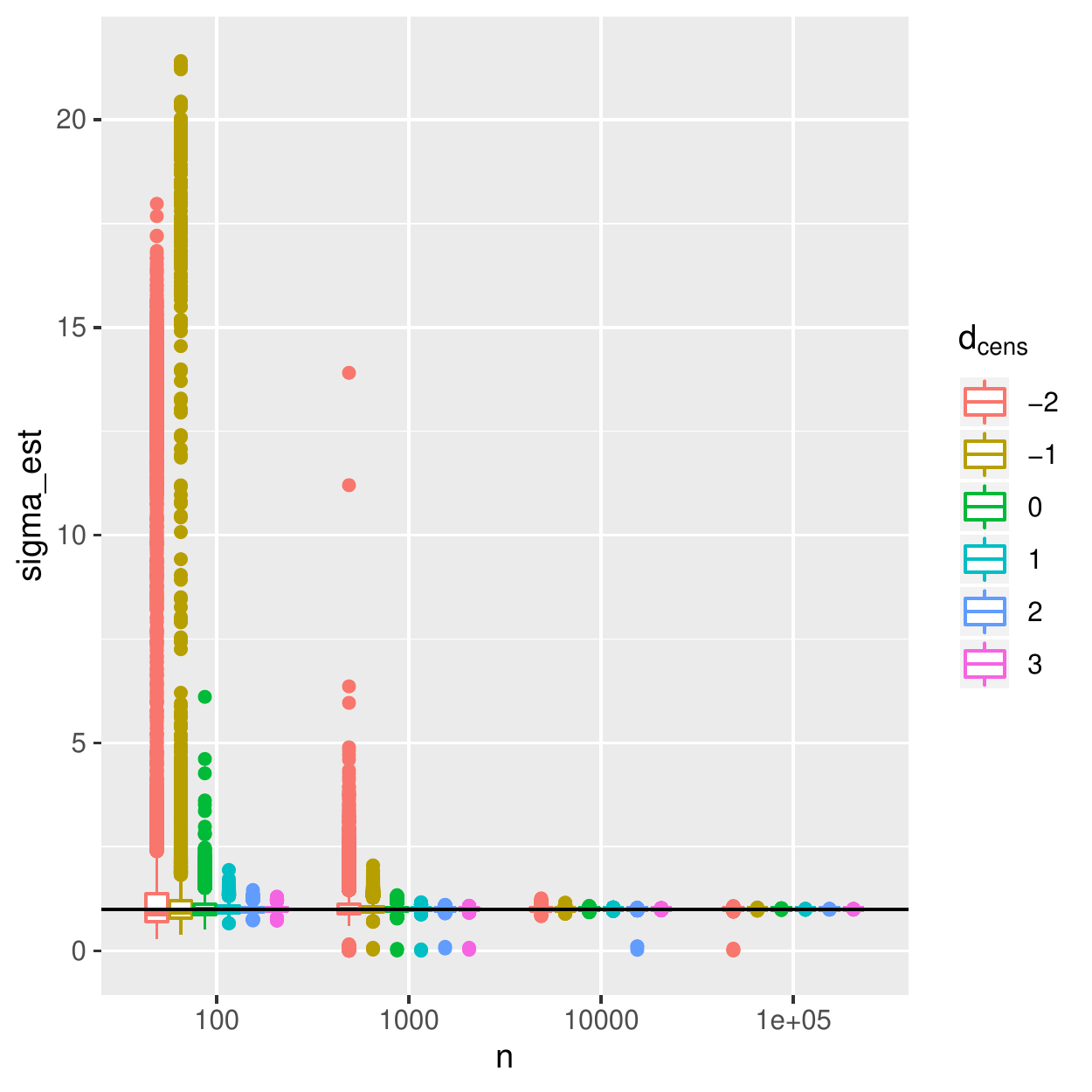}
\end{minipage}&
\begin{minipage}[c]{0.225\linewidth}
\includegraphics[width=\linewidth]{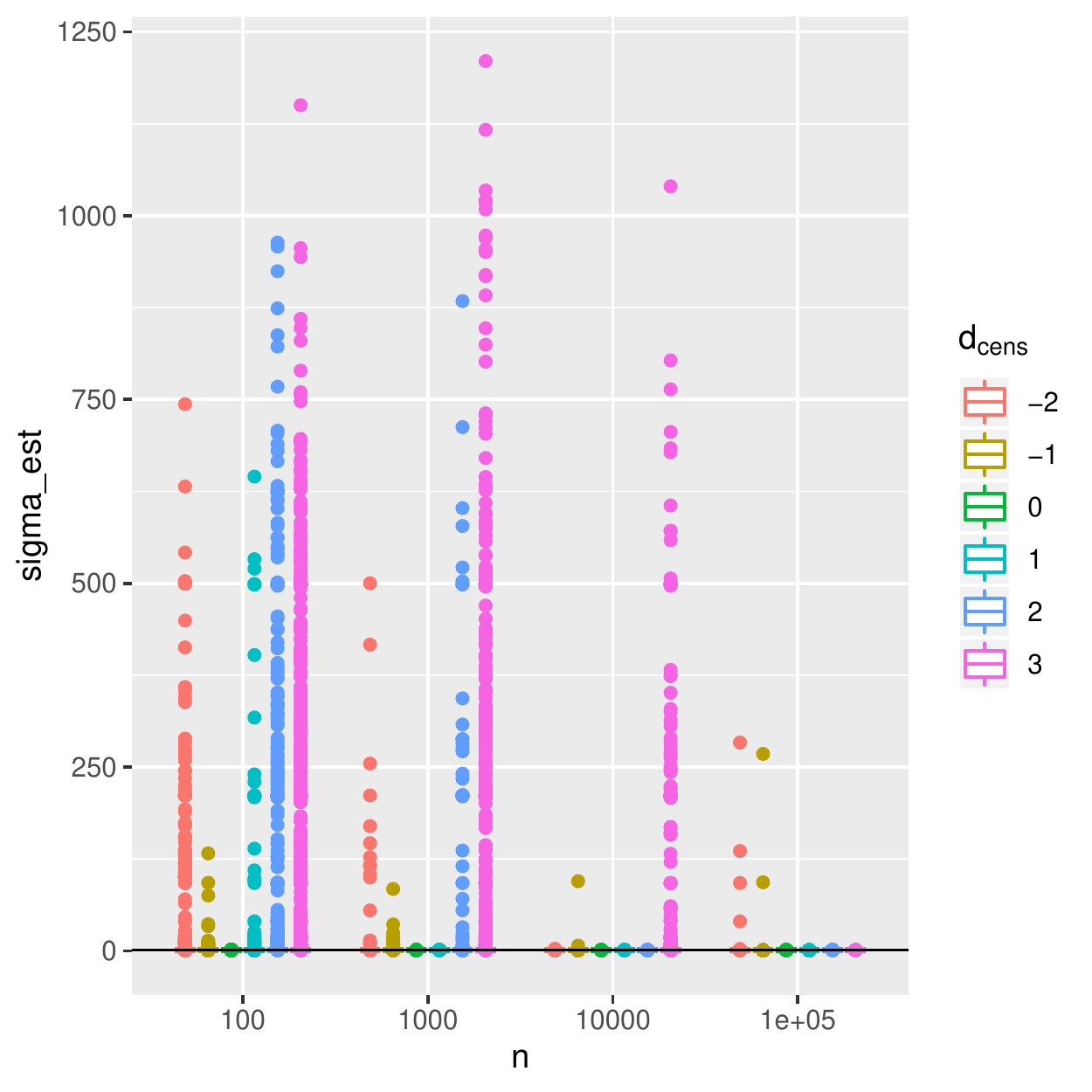}
\end{minipage}&
\begin{minipage}[c]{0.225\linewidth}
\includegraphics[width=\linewidth]{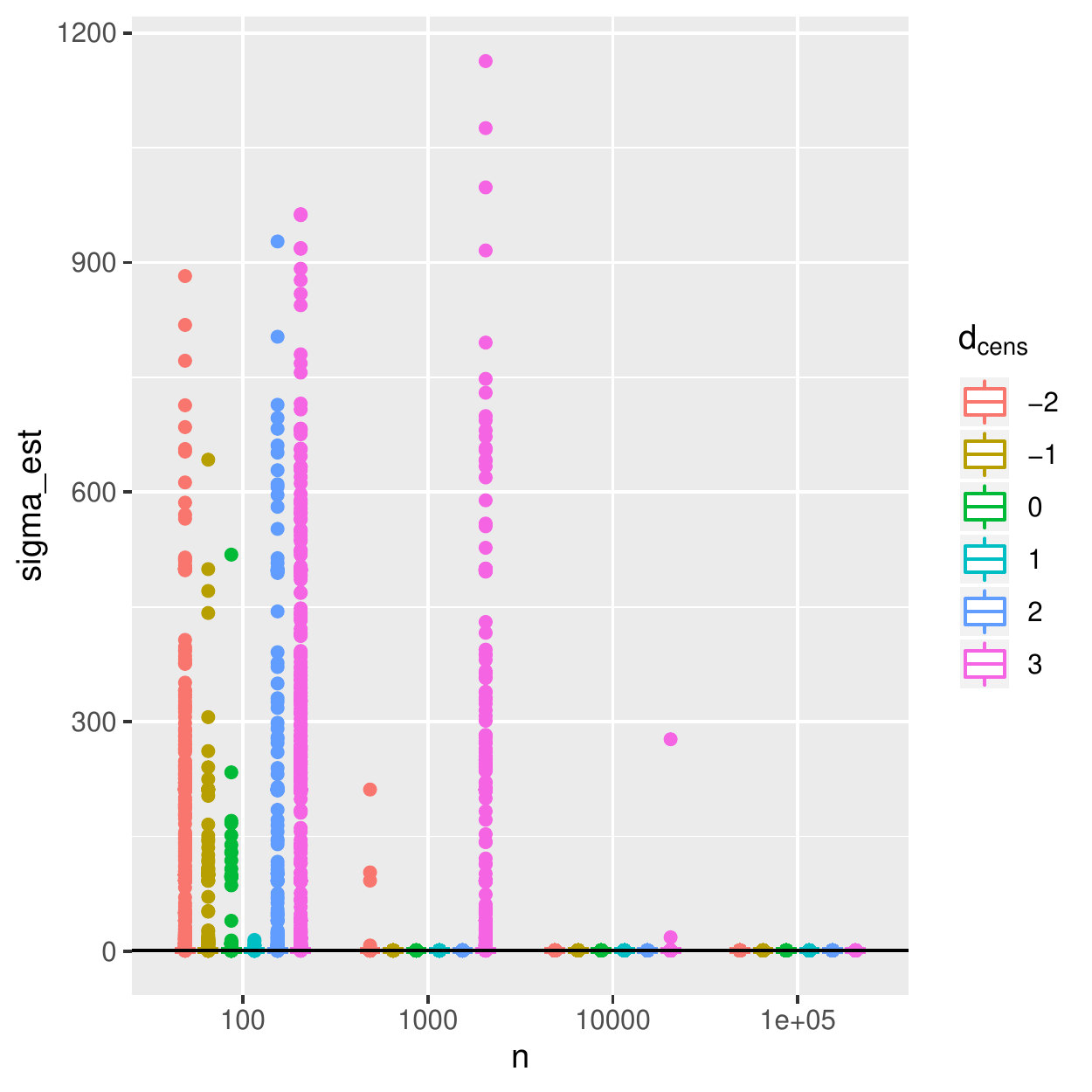}
\end{minipage}&
\begin{minipage}[c]{0.225\linewidth}
\includegraphics[width=\linewidth]{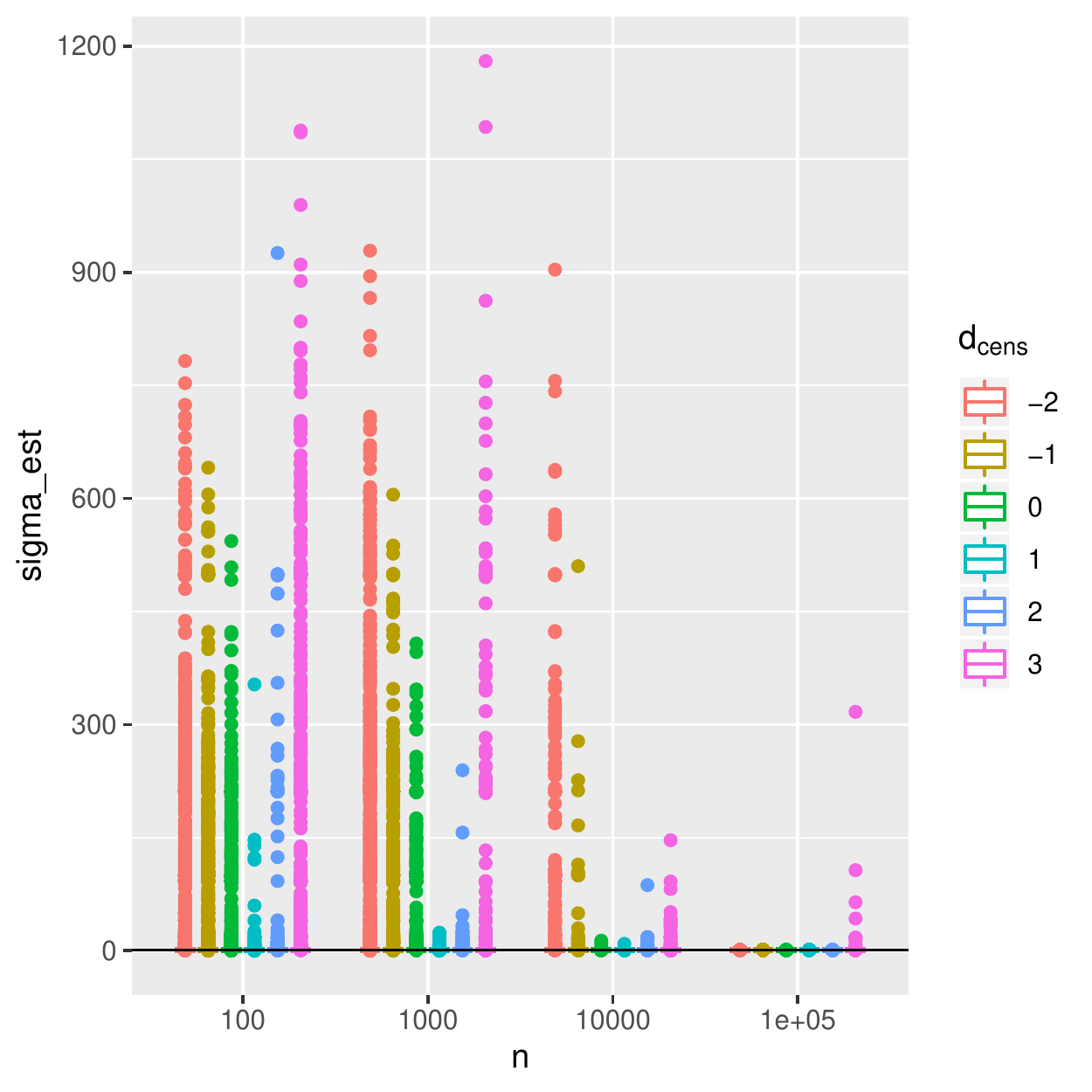}
\end{minipage}\\
&&&\\
\hline
&&&\\
&
\begin{minipage}[c]{0.225\linewidth}
\includegraphics[width=\linewidth]{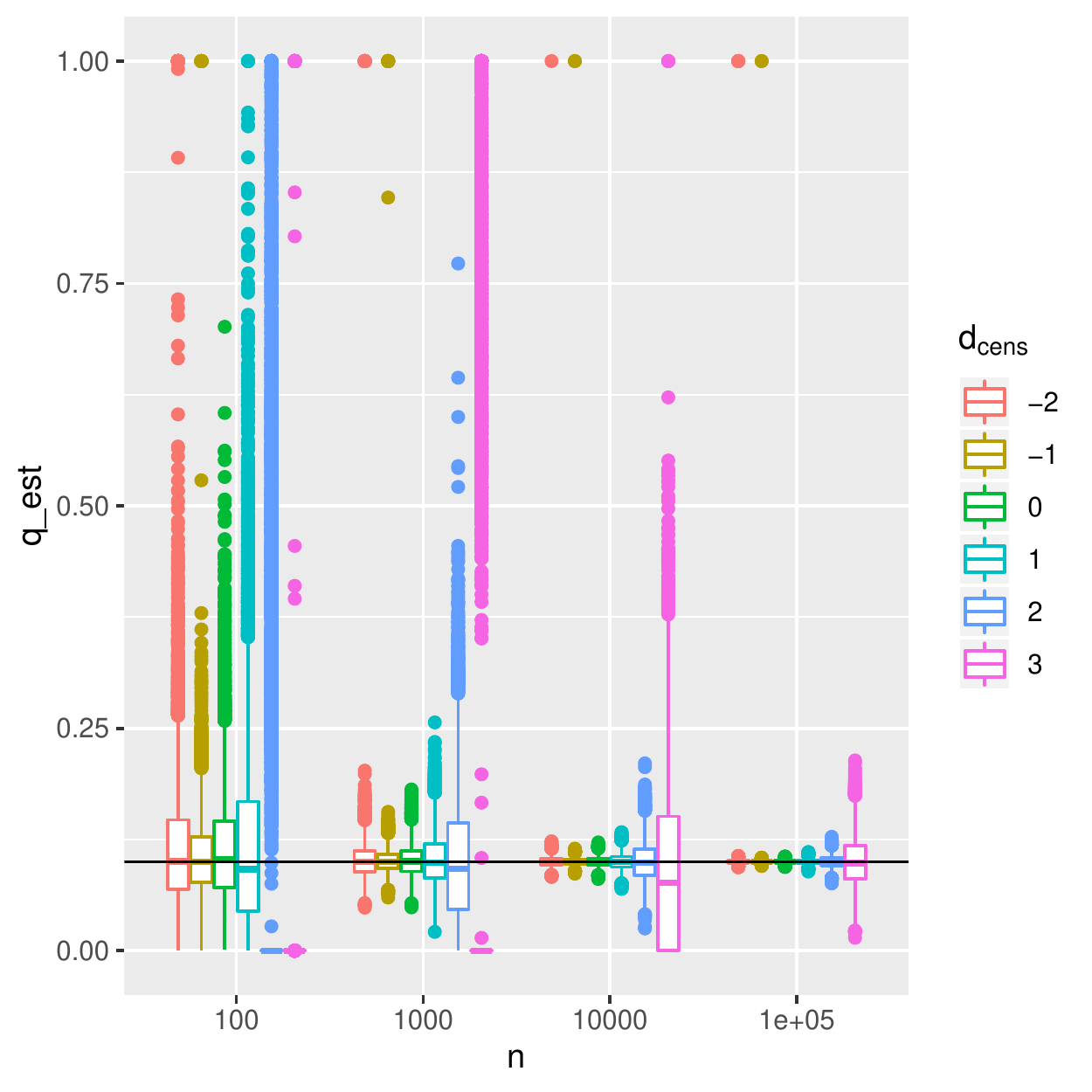}
\end{minipage}
&
\begin{minipage}[c]{0.225\linewidth}
\includegraphics[width=\linewidth]{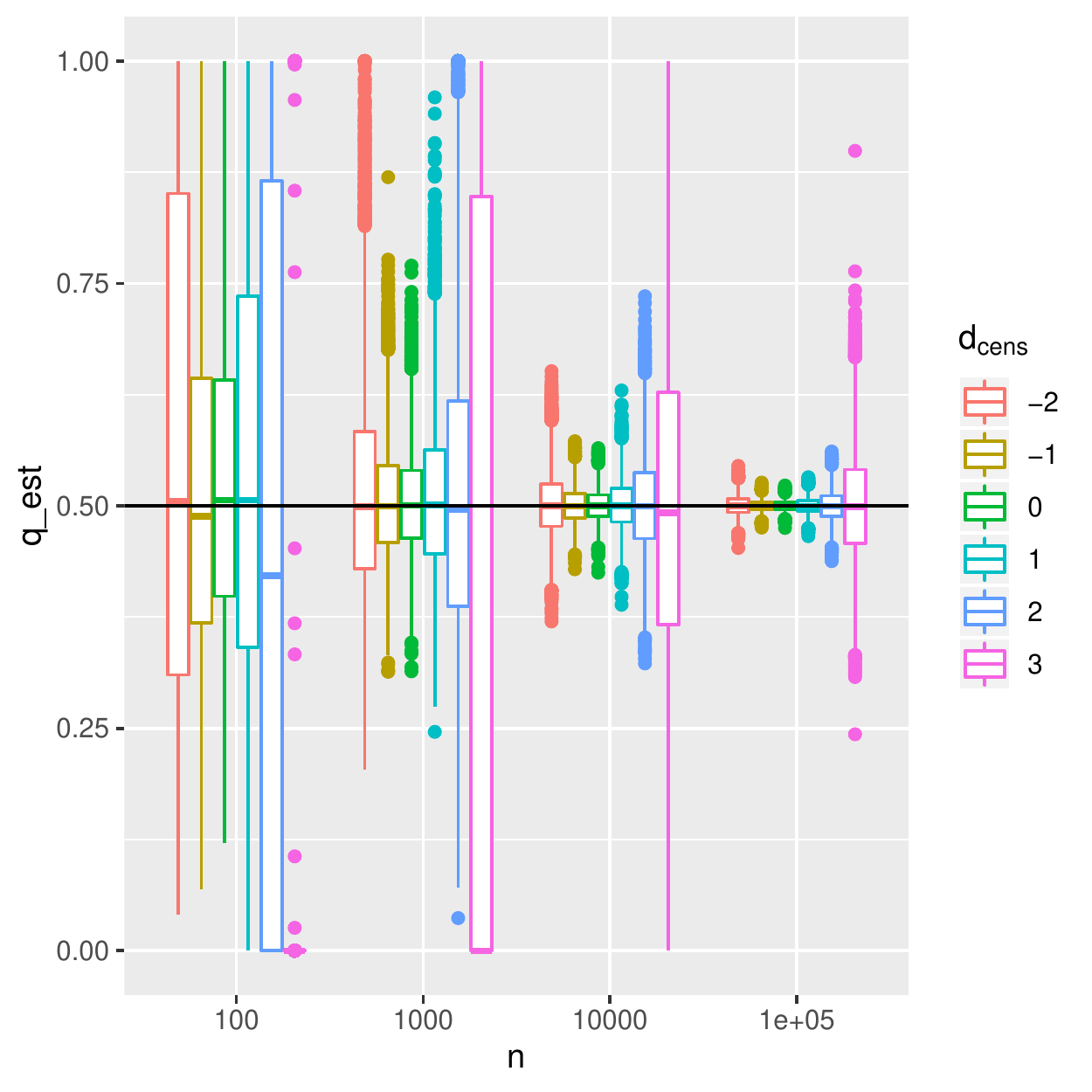}
\end{minipage}
&
\begin{minipage}[c]{0.225\linewidth}
\includegraphics[width=\linewidth]{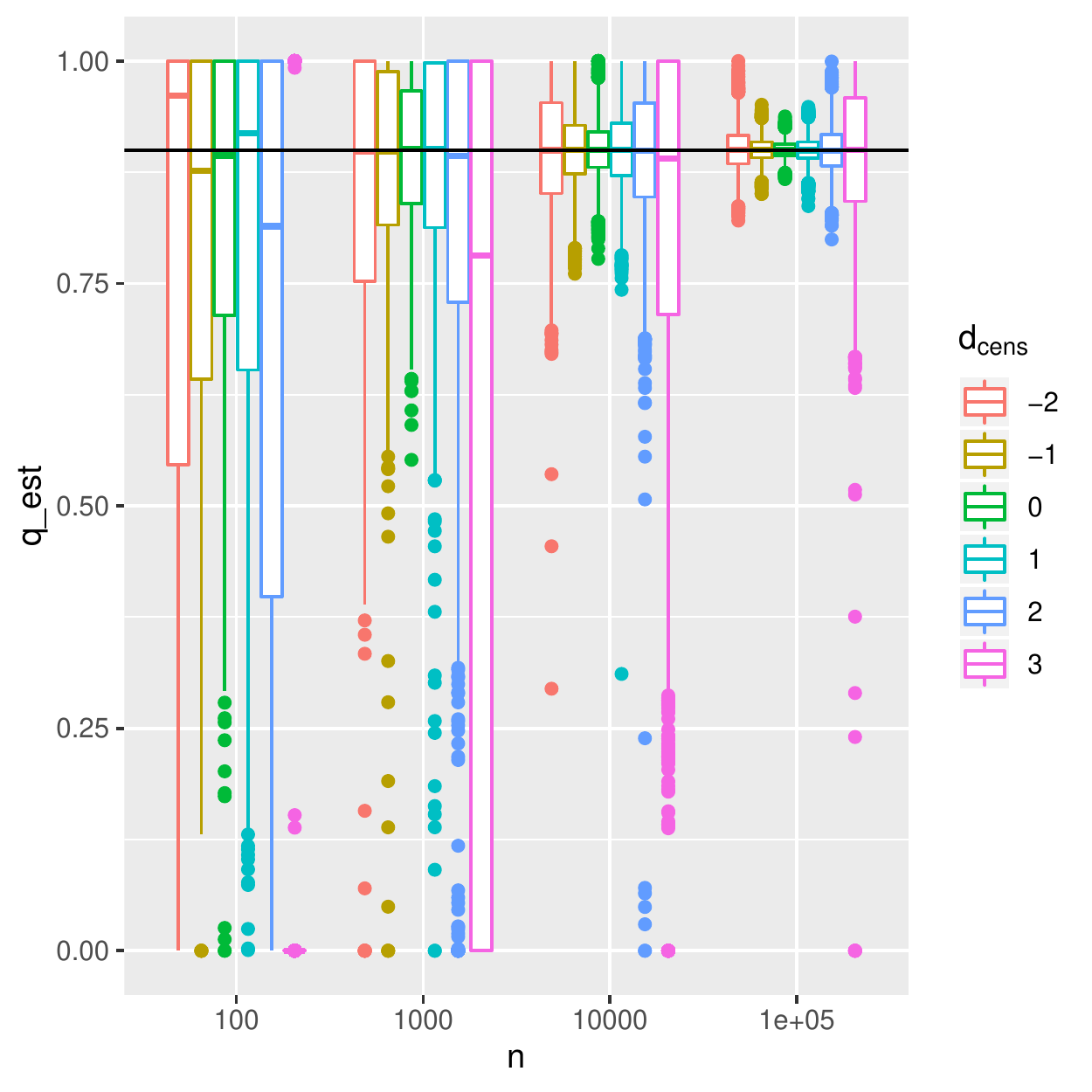}
\end{minipage}\\
\end{tabular}
\caption{\label{Fig:boxplots}Boxplots of the estimations of $\mu$ (first row), $\sigma$ (second row) and $p$ (last row~; only for partially censored model) in function of model (columns), the size $n$ of sample (x-axis) and the value of the threshold $s$ (color). The true value is symbolised by the horizontal black line.}
\end{center}
\end{figure}

\begin{figure}[t]
\begin{center}
\begin{tabular}{c||c|c|c}
Completely&\multicolumn{3}{|c}{Partially}\\
\cline{2-4}
$q=0$&$q=0.1$&$q=0.5$&$q=0.9$\\
\hline
&&&\\
\begin{minipage}[c]{0.225\linewidth}
\includegraphics[width=\linewidth]{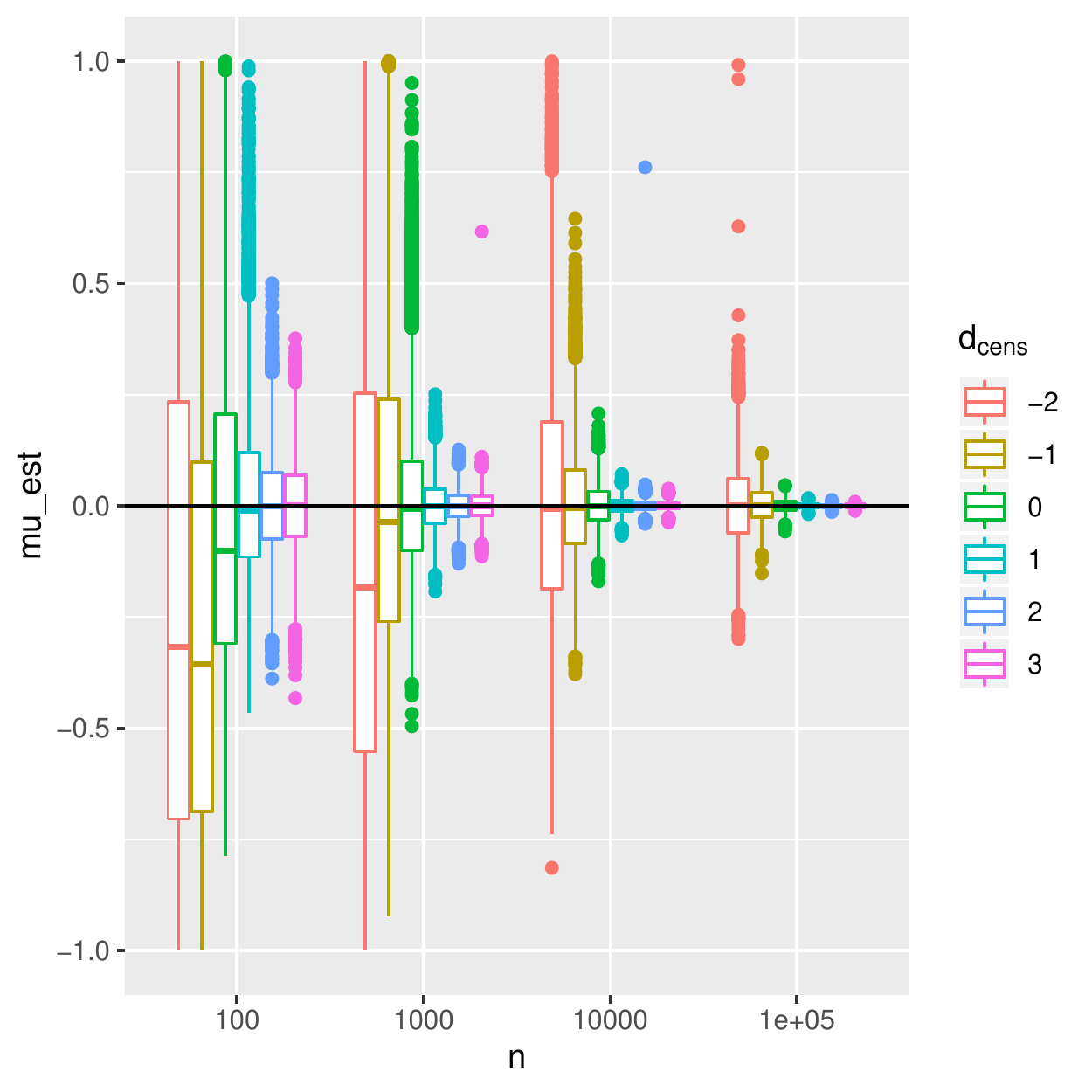}
\end{minipage}&
\begin{minipage}[c]{0.225\linewidth}
\includegraphics[width=\linewidth]{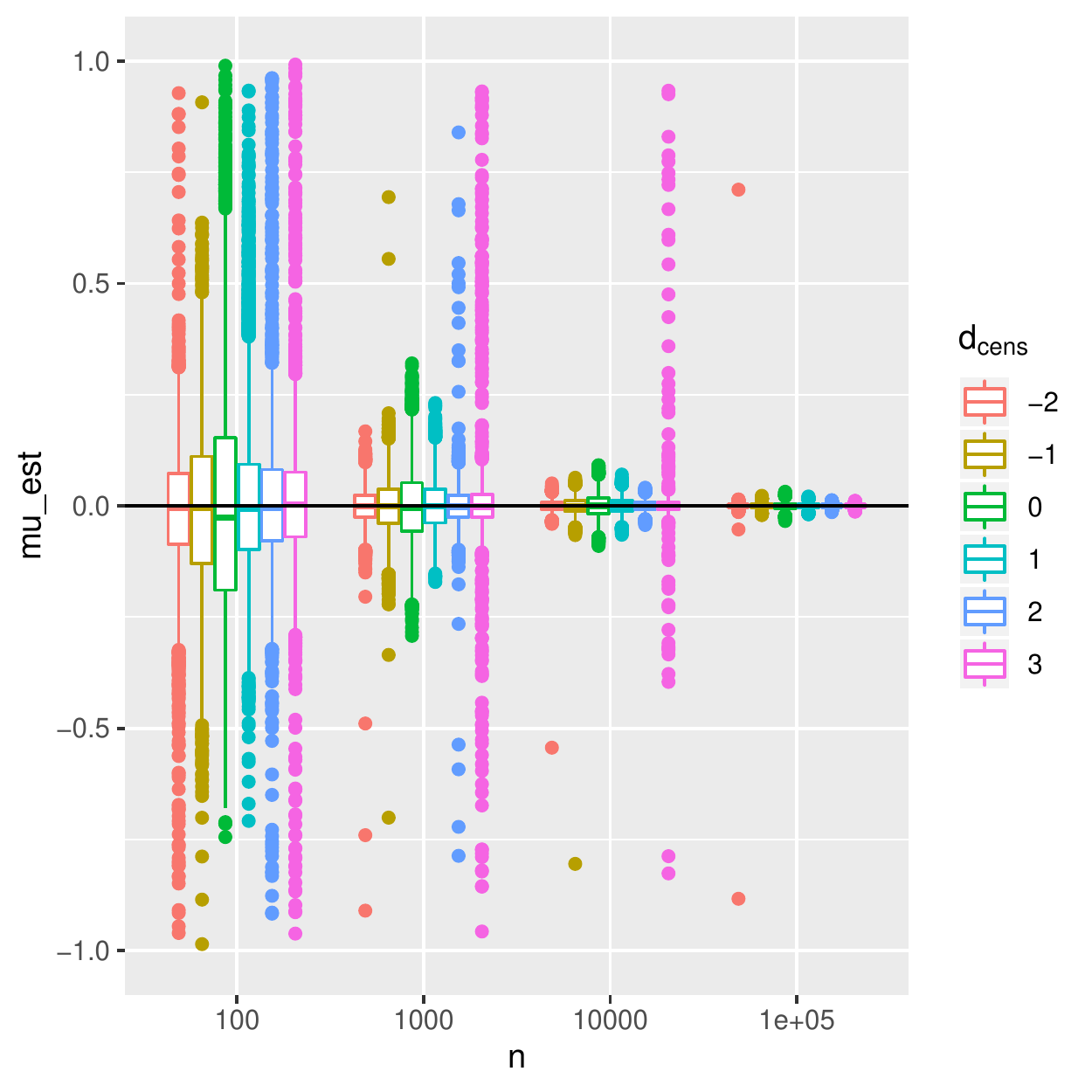}
\end{minipage}&
\begin{minipage}[c]{0.225\linewidth}
\includegraphics[width=\linewidth]{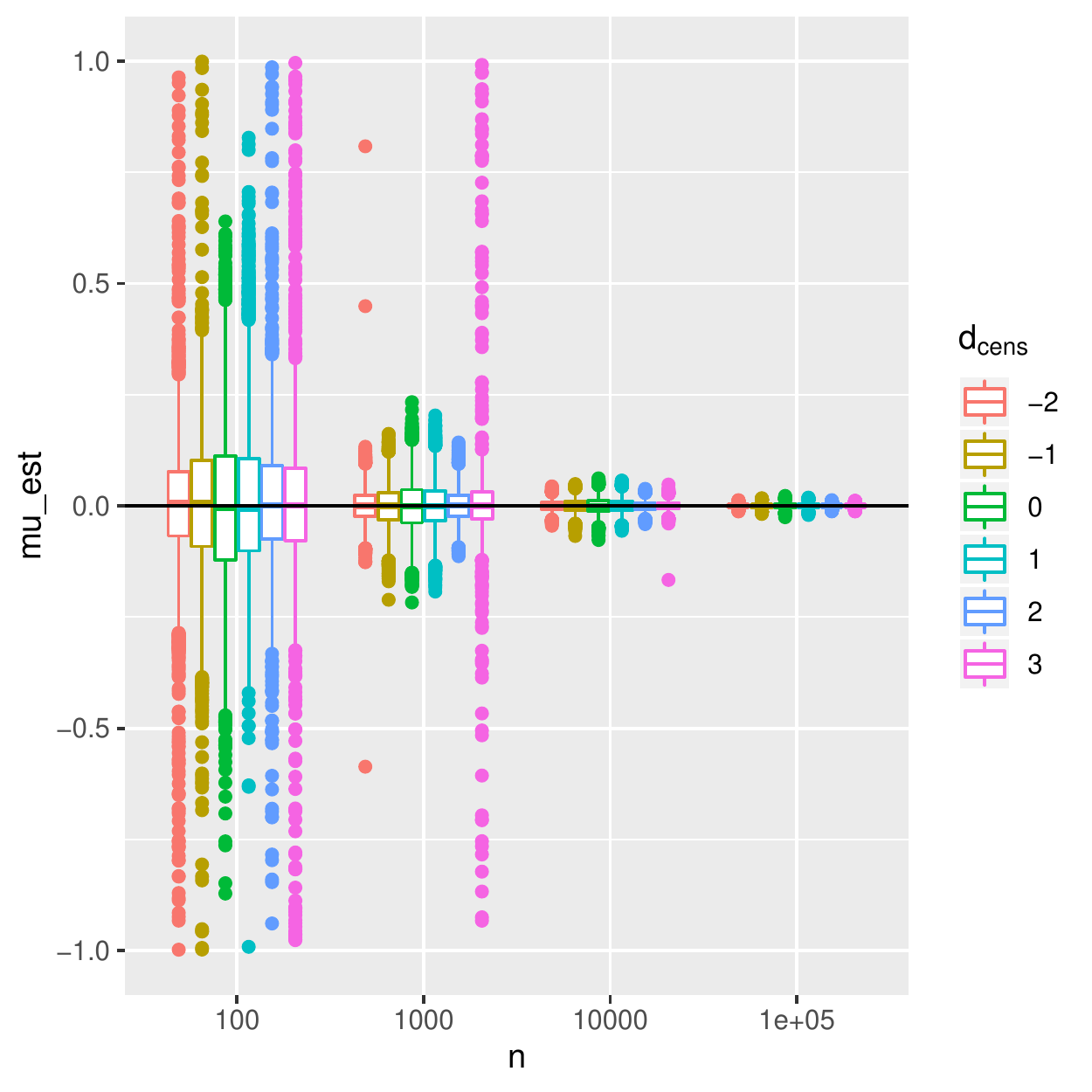}
\end{minipage}&
\begin{minipage}[c]{0.225\linewidth}
\includegraphics[width=\linewidth]{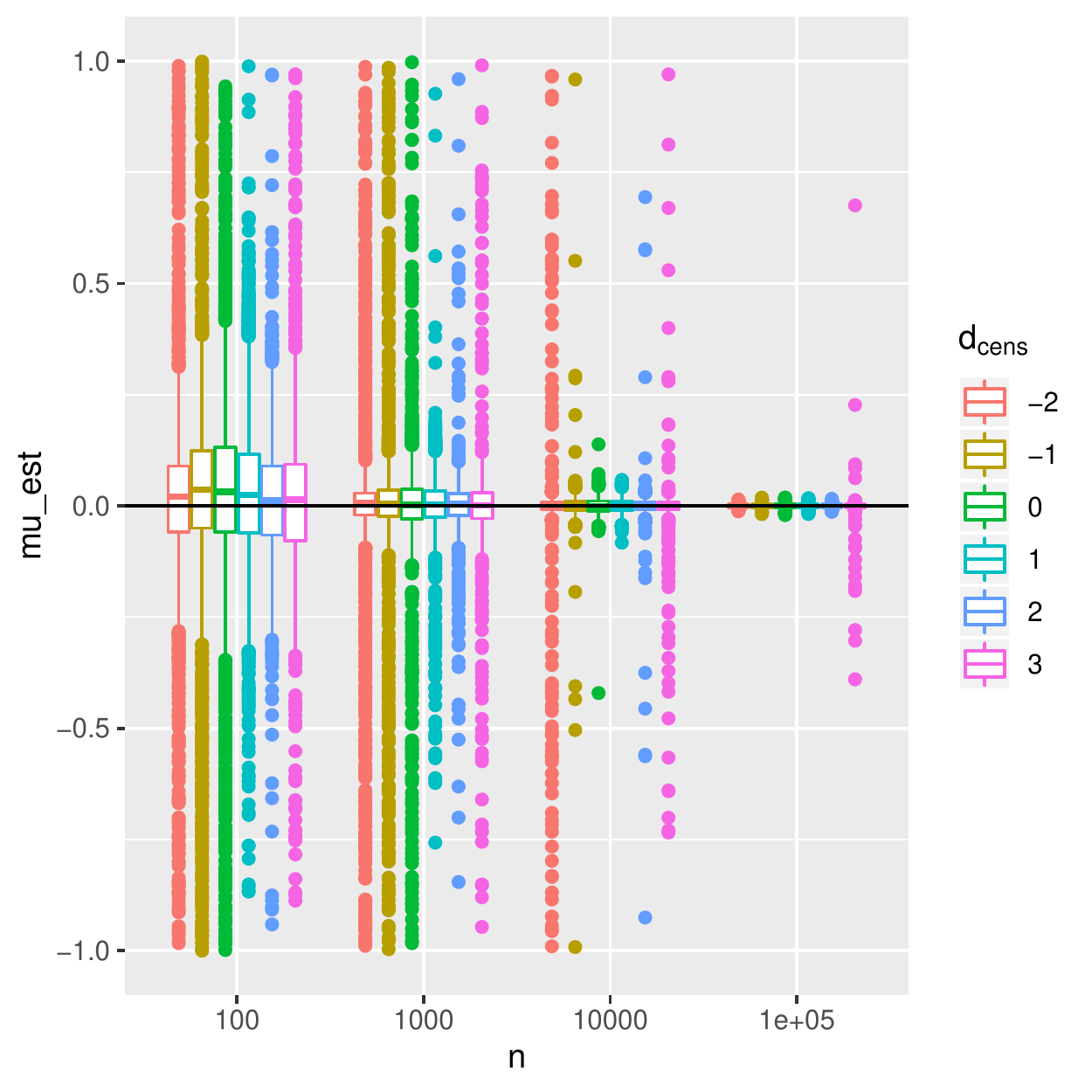}
\end{minipage}\\
&&&\\
\hline
&&&\\
\begin{minipage}[c]{0.225\linewidth}
\includegraphics[width=\linewidth]{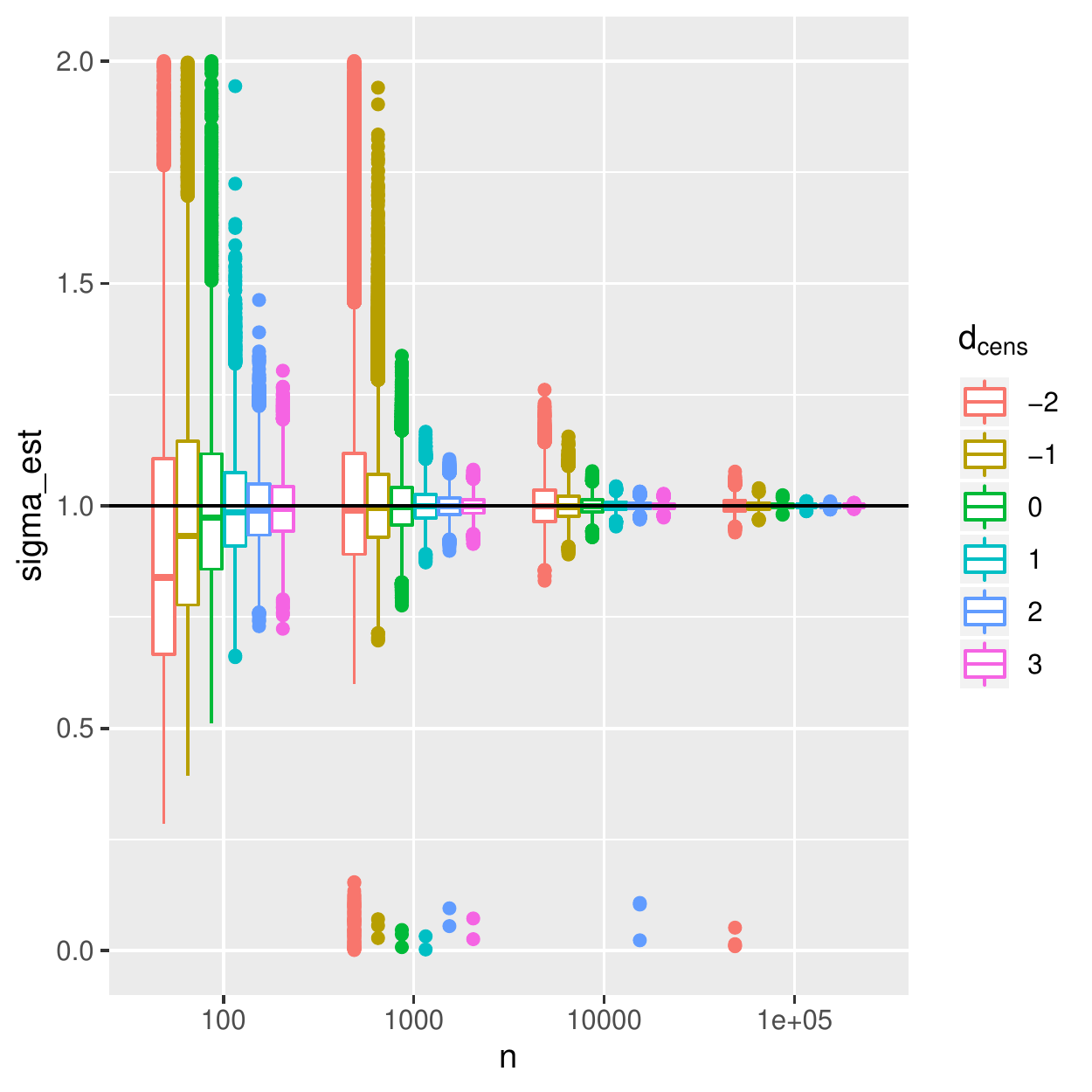}
\end{minipage}&
\begin{minipage}[c]{0.225\linewidth}
\includegraphics[width=\linewidth]{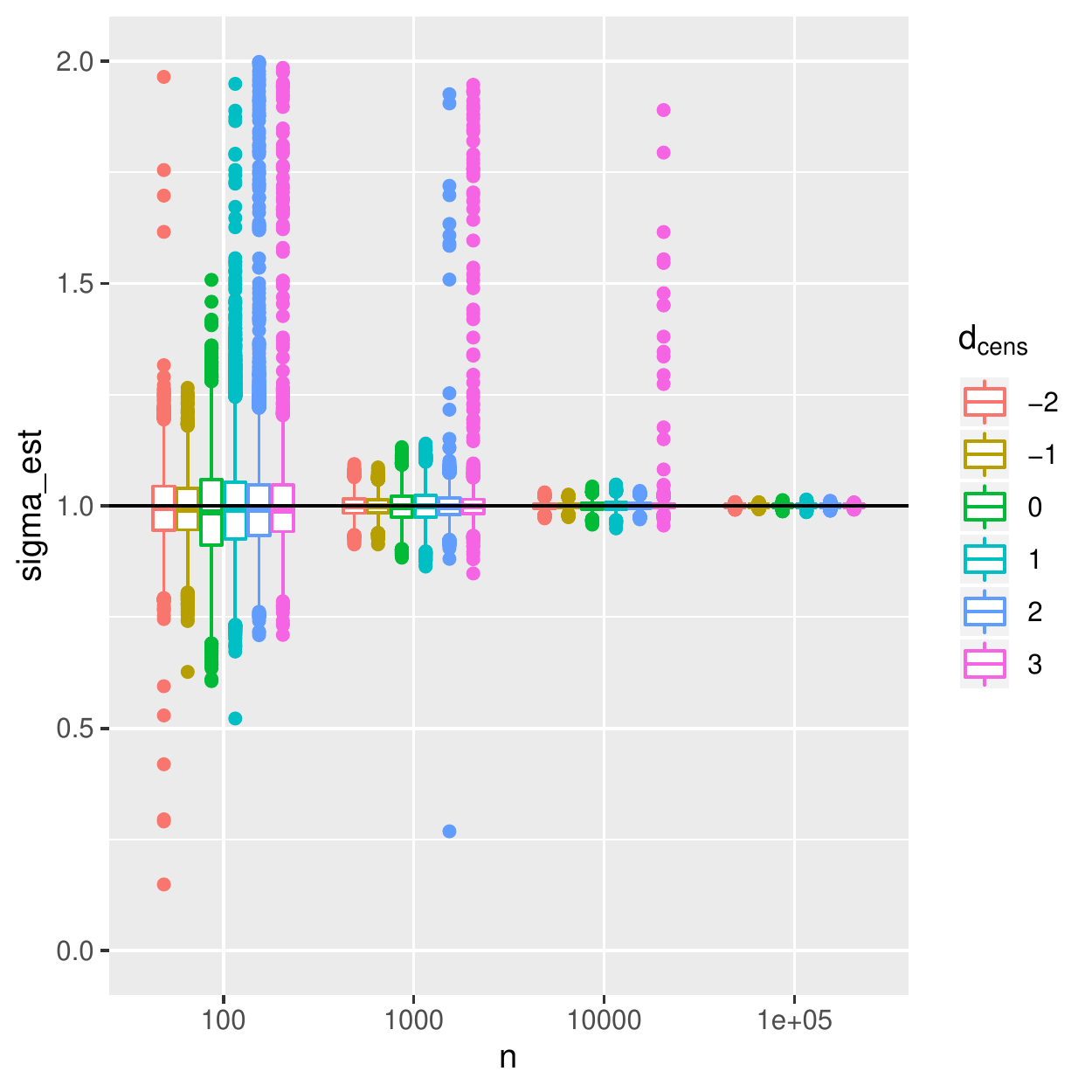}
\end{minipage}&
\begin{minipage}[c]{0.225\linewidth}
\includegraphics[width=\linewidth]{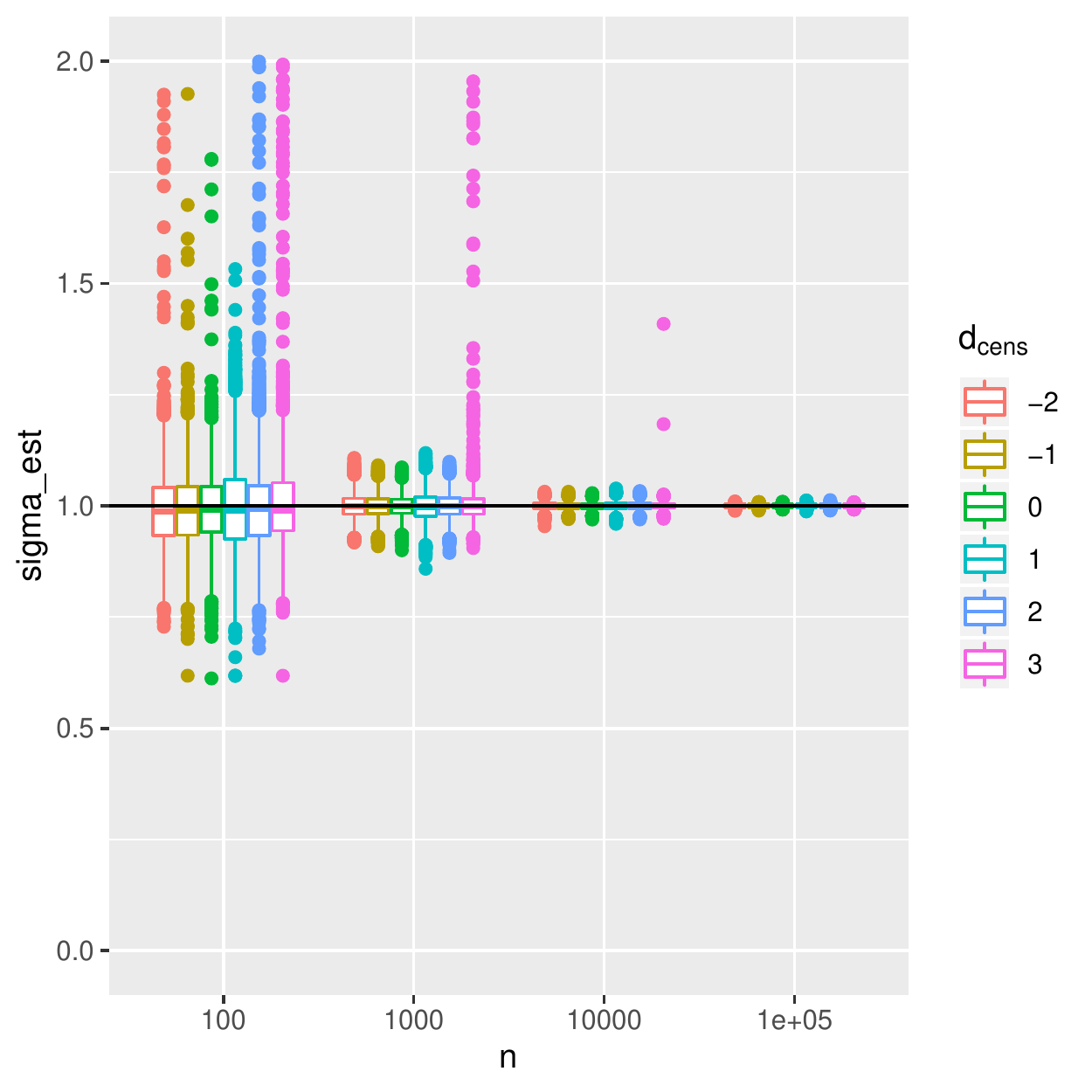}
\end{minipage}&
\begin{minipage}[c]{0.225\linewidth}
\includegraphics[width=\linewidth]{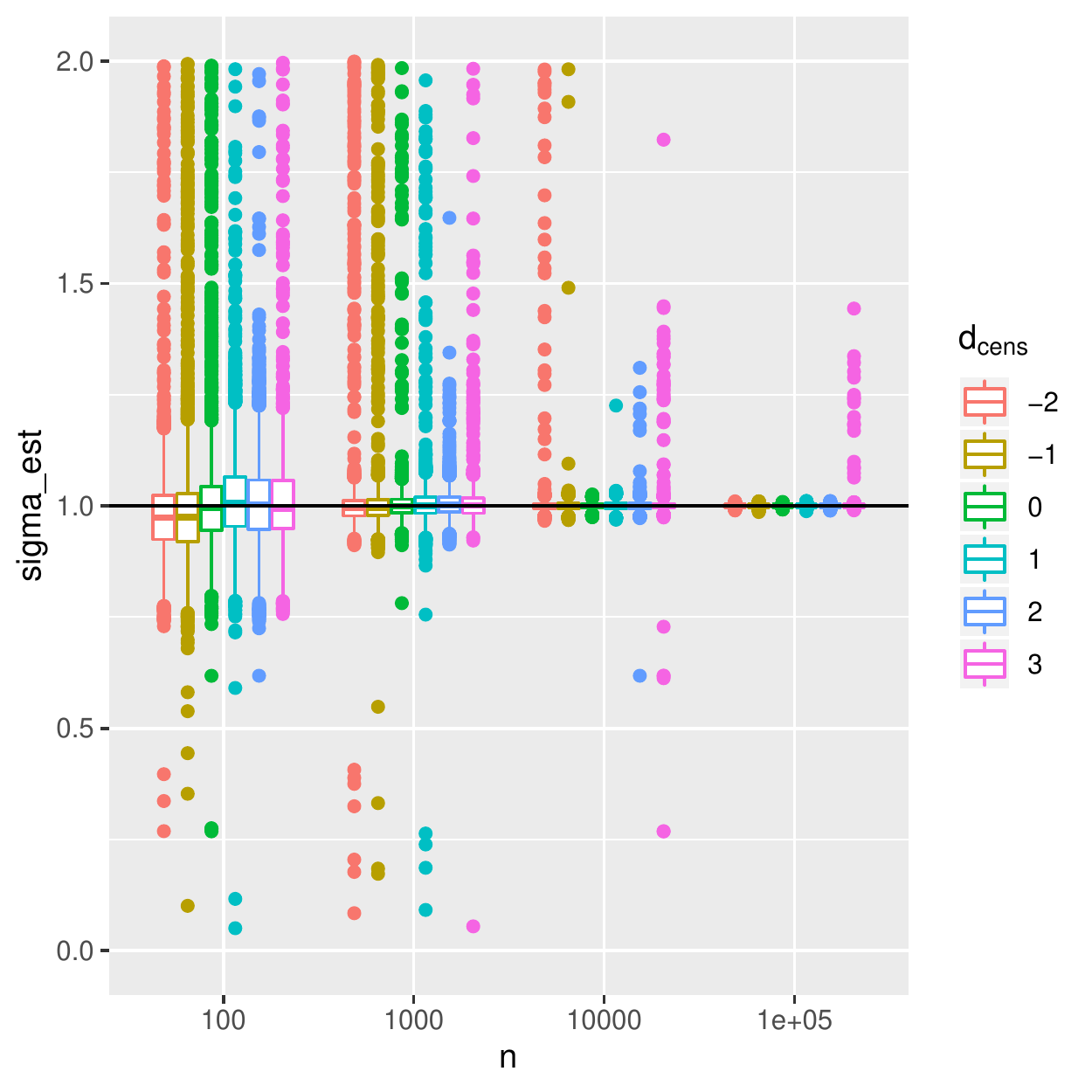}
\end{minipage}\\
&&&\\
\hline
&&&\\
&
\begin{minipage}[c]{0.225\linewidth}
\includegraphics[width=\linewidth]{boxplot_Inv_prop_p_1}
\end{minipage}
&
\begin{minipage}[c]{0.225\linewidth}
\includegraphics[width=\linewidth]{boxplot_Inv_prop_p_5}
\end{minipage}
&
\begin{minipage}[c]{0.225\linewidth}
\includegraphics[width=\linewidth]{boxplot_Inv_prop_p_9}
\end{minipage}\\
\end{tabular}
\caption{\label{Fig:boxplots:zoom}Zoom on boxplots of the estimations of $\mu$ (first row), $\sigma$ (second row) and $q$ (last row~; only for partially censored model) in function of model (columns), the size $n$ of sample (x-axis) and the value of the threshold $s$ (color). The true value is symbolised by the horizontal black line.}
\end{center}
\end{figure}

\subsection{Censored mixture model}
\label{sec:stats}

In this section, we present the complementary graphs of Sec.~\ref{sec:statistical}. The statistical model presented here has the following density defined for all $x\in\mathbb{R}$ by:

\begin{equation}\label{Eq:SI:Dens}
f(x)=\sum_{k=1}^3\pi_k \frac{f_{\mu_k,\sigma_k}(x)}{q_k+(1-q_k)F_{\mu_k,\sigma_k}(\dcens)}\left[1+(q_k-1)\mathds{1}_{\{x> \dcens\}}\right].
\end{equation}

The completely censured mixture model has the same density than the \refn{Eq:SI:Dens} with $q_k=0$.

With the model in~\refn{Eq:SI:Dens}, we can estimate the theoretical false negative rate by the following formula:

\begin{equation}\label{Eq:SI:False}
\mathbb{P}\left(\text{false negative}\right)=\sum_{k=1}^3\pi_k\left[1-F_{\mu_k,\sigma_k}(\dcens)\right]\left(1-q_k\right).
\end{equation}

\begin {table}[ht!]
{
\begin{center}
\caption {Estimated parameters for the naive Gaussian mixture fit and the censored Gaussian mixture fits defined in \refn{Eq:SI:Dens}, for the datasets available in \cite{Jones} and \cite{Cabrera2020}. Note the consistency of the estimations, in particular in the partially and completely censored models.} \label{tab:MM_Gaussian}
\cite{Jones}
\begin{tabular}{|c||c||c|c|c||c|c|c||c|c|c|}
\hline
Model&${q_i}_{i=1..3}$&$\mu_1$ & $\sigma_1$ & $\pi_1$ & $\mu_2$ & $\sigma_2$  & $\pi_2$ & $\mu_3$ & $\sigma_3$ & $\pi_3$   \\
\hline
Naive&&20.41 & 3.74 & 0.34 &29.43 & 2.81 & 0.52 & 34.32 & 0.89 & 0.14\\
\hline
Partially &0.2&20.14 & 3.60 & 0.32 & 29.35& 2.96 & 0.53 & 34.78  & 1.32  & 0.14\\
\hline
Completely&&20.13 & 3.60 & 0.33 & 29.41 & 3.02 & 0.54 & 34.81 & 1.31 & 0.13\\
\hline
\end{tabular}
\vspace{.3cm}

\cite{Cabrera2020}
\begin{tabular}{|c||c||c|c|c||c|c|c||c|c|c|}
\hline
Model&${q_i}_{i=1..3}$&$\mu_1$ & $\sigma_1$ & $\pi_1$ & $\mu_2$ & $\sigma_2$  & $\pi_2$ & $\mu_3$ & $\sigma_3$ & $\pi_3$   \\
\hline
Naive&&19.75 & 2.05 & 0.20 &25.61 & 2.99 & 0.39 & 34.28 & 2.36 & 0.40\\
\hline
Partially &0.4&20.16 & 2.19 & 0.26 & 26.03& 2.58 & 0.43 & 34.54  & 2.66  & 0.41\\
\hline
Completely&&20.55 & 3.45 & 0.31 & 26.33 & 2.11 & 0.24 & 34.41 & 2.98 & 0.43\\
\hline
\end{tabular}
\end{center}
}
\end {table}

\begin{figure}[!h]
\begin{tabular}{ccc}
\begin{minipage}[c]{0.3\linewidth}
\includegraphics[width=\linewidth]{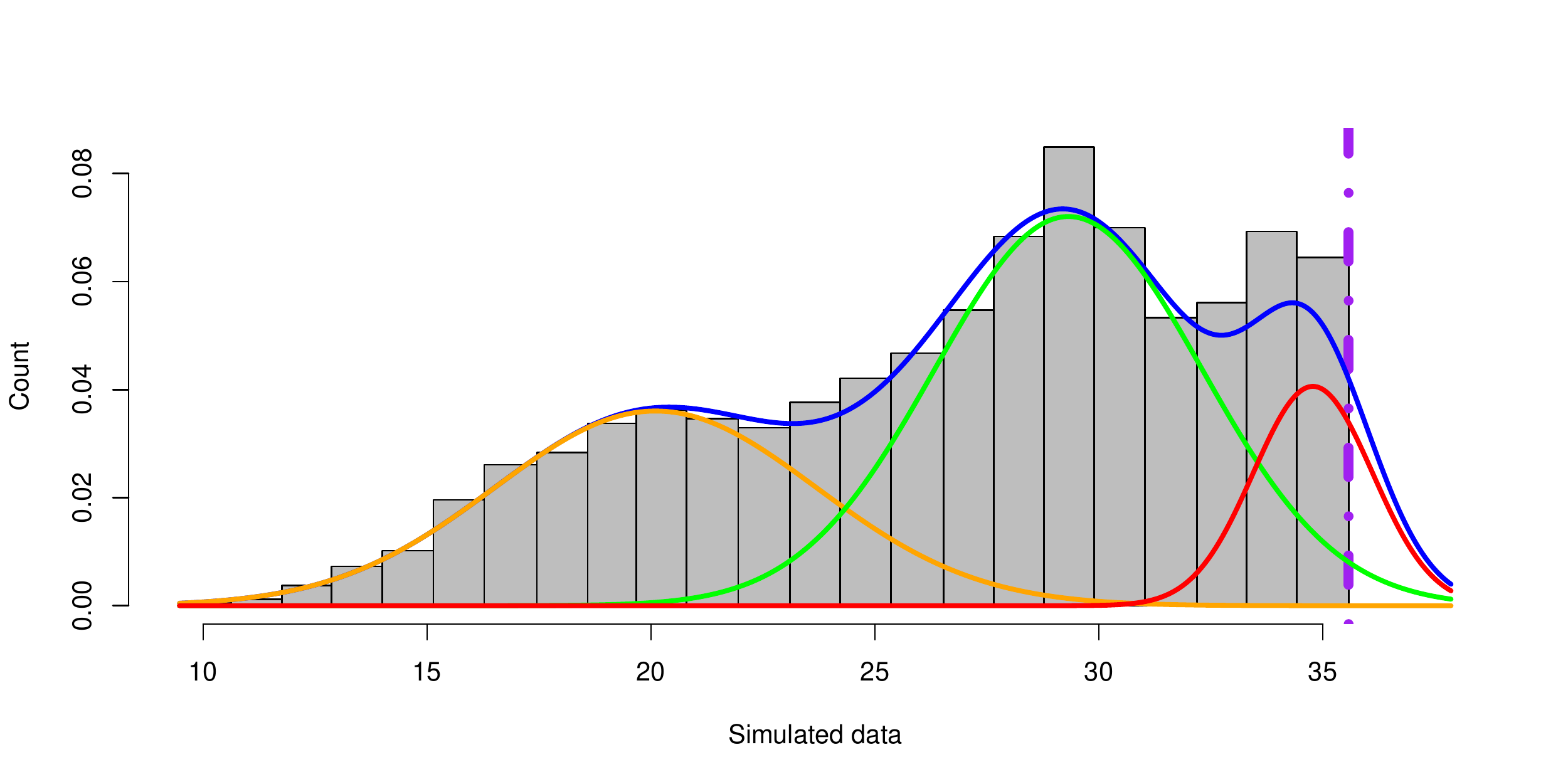}
\end{minipage}&
\begin{minipage}[c]{0.3\linewidth}
\includegraphics[width=\linewidth]{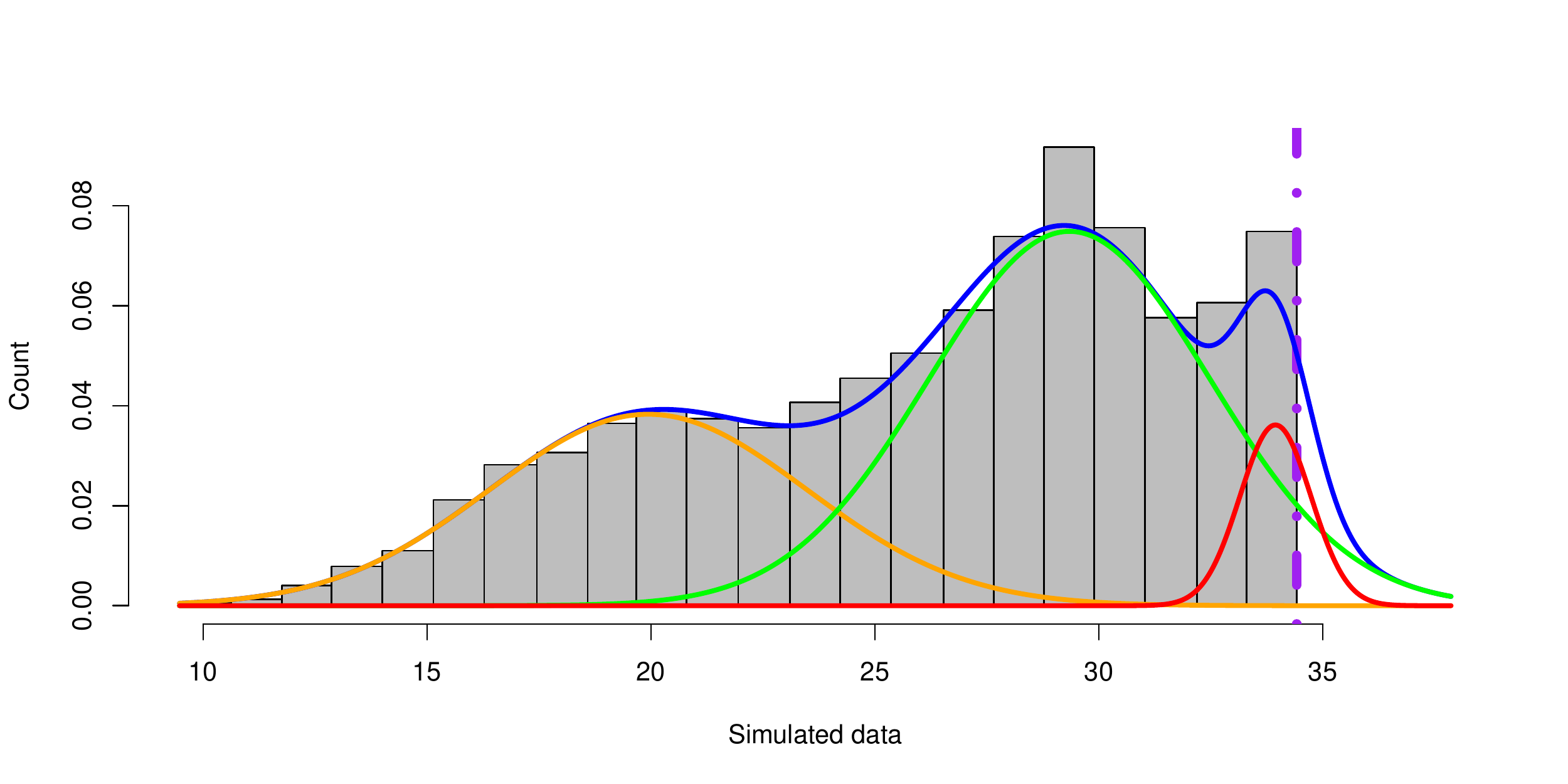}
\end{minipage}&
\begin{minipage}[c]{0.3\linewidth}
\includegraphics[width=\linewidth]{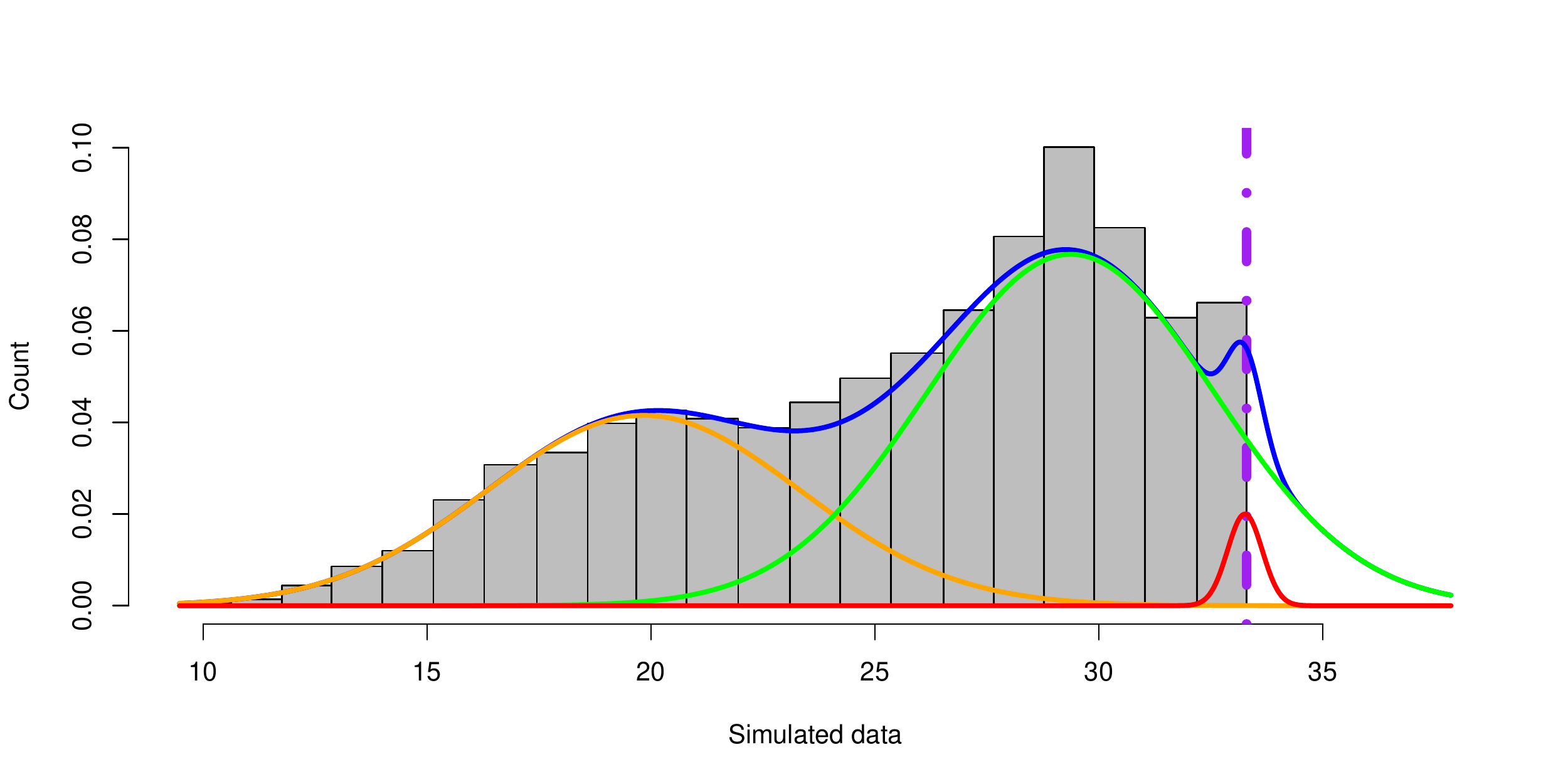}
\end{minipage}
\end{tabular}
\caption{\label{Fig:Histo:MM_Censure:long}Density of the fits of the censored model with three components (obtained when erasing data to the right of the threshold) with a threshold at 35.6 (left), 34.4 (middle) and 33.2 (right): the orange, green and red lines represent the density of each component and the blue line the density of the mixture. The histogram correspond to the one presented in \cite{Jones}.}
\end{figure}

\begin {table}[ht!]
{
\begin{center}
\caption {Estimated parameters for the censored Gaussian mixture fit define in \refn{Eq:SI:Dens} for different values of the threshold~$\dcens$, applied to reconstructed data data with same distribution as in~\cite{Jones} erased above $\dcens$.} \label{tab:MM_Gaussian:Comp} 
\begin{tabular}{|c||c|c|c||c|c|c||c|c|c|}
\hline
$\dcens$&$\mu_1$ & $\sigma_1$ & $\pi_1$ & $\mu_2$ & $\sigma_2$  & $\pi_2$ &  $\mu_3$ & $\sigma_3$ & $\pi_3$   \\
\hline
35.6&20.13 & 3.60 & 0.33 & 29.41 & 3.02 & 0.54 & 34.81 & 1.31 & 0.13\\
\hline
34.4&20.13 & 3.61 & 0.35 & 29.35 & 2.99 & 0.57 & 34.21 & 1.03 & 0.08 \\
\hline
33.2&19.97 & 3.56 & 0.03 & 29.40 & 3.14 & 0.59 & 33.21 & 1.16 & 0.48\\
\hline
\end{tabular}
\end{center}
}
\end {table}

\clearpage

\section{Estimation of the false-negative risk in the presence of multiple positive individuals in the pool}
\label{si:multiplepositive}

We treat here the case of a pool of $N$ samples that contains $k>1$ positive individuals. It is particularly relevant to consider these case when pooling correlated samples, (such has individuals living in the same household, or workers sharing the same area). Indeed,  in this case, knowing that one individual is contaminated increases the probability that more individuals in the pool are as well, which should make the detection easier.

For the sake of completeness, here we also consider the risk of defective sampling (e.g. that the swabs fails to collect viral load in an infected individual), which we denote $\zeta$. The probability of having a negative pool result given that there is $k$ positive samples within the pool reads, according to the model presented in Eq. \eqref{eqn:groupModel}: 
\begin{align}
\mathbb{P}\left[-\lvert k+ \right] &= \sum_{j=1}^{k} \binom{k}{j} \zeta^{k-j}  (1-\zeta)^{j} \mathbb{P}\left[\log_2\left(\sum_{i=1,\ldots,j} C_i/N\right) > \dcens \right].
\label{eq:falsenegativegeneral}
\end{align}
Under the two assumptions that: 
\begin{enumerate}
    \item the viral load distribution spans several order of magnitudes (e.g. log-normal distributed), so that, following Eq. (\ref{eq:min_viralload}):
\begin{align}
\mathbb{P}\left[\log_2\left(\sum_{i=1,\ldots,j} C_i/N\right) > d_\mathrm{max} \right] = \mathbb{P}\left[\mathrm{min}_{i=1,\ldots,j}( \log_2(C_i)) > d^{(N)}_\mathrm{max} \right],
\end{align}
with $d^{(N)}_\mathrm{max} = \dcens - \log_2(N)$.
    \item  the viral loads (not the infection status) between the k infected individuals are independent, in which case:
\begin{align}
\mathbb{P}\left[\mathrm{min}_{i=1,\ldots,j}( \log_2(C_i)) > d^{(N)}_\mathrm{max} \right] = \mathbb{P}\left[\log_2(C_1) >  d^{(N)}_\mathrm{max}) \right]^j,
\end{align}
\end{enumerate}
we find that Eq. \eqref{eq:falsenegativegeneral} takes the simple expression:
\begin{align}
\mathbb{P}\left[-\lvert k+ \right] &=  \left(\zeta + (1-\zeta)(1-\mathbb{P}\left[\log_2(C_1) <  d^{(N)}_\mathrm{max}) \right]\right)^k,\\
&= \left( 1 - (1-\zeta)\mathbb{P}\left[\log_2(C_1)< d^{(N)}_\mathrm{max})\right]\right)^k.
\end{align}
In Fig. \ref{si:correlated}, in the case of correlated samples, we find that the false negative risk in pooling is greatly reduced if there is more than one positive sample in the pool. The origin of such false-negative reduction is the large variability in viral load and the fact that the amplification technique is particularly sensitive to the highest viral load in the sample. Such false-negative reduction is robust to the presence of a finite risk of defective sampling $\zeta = 5\%$.

\begin{figure}[h]
\centering
\includegraphics[width=8cm]{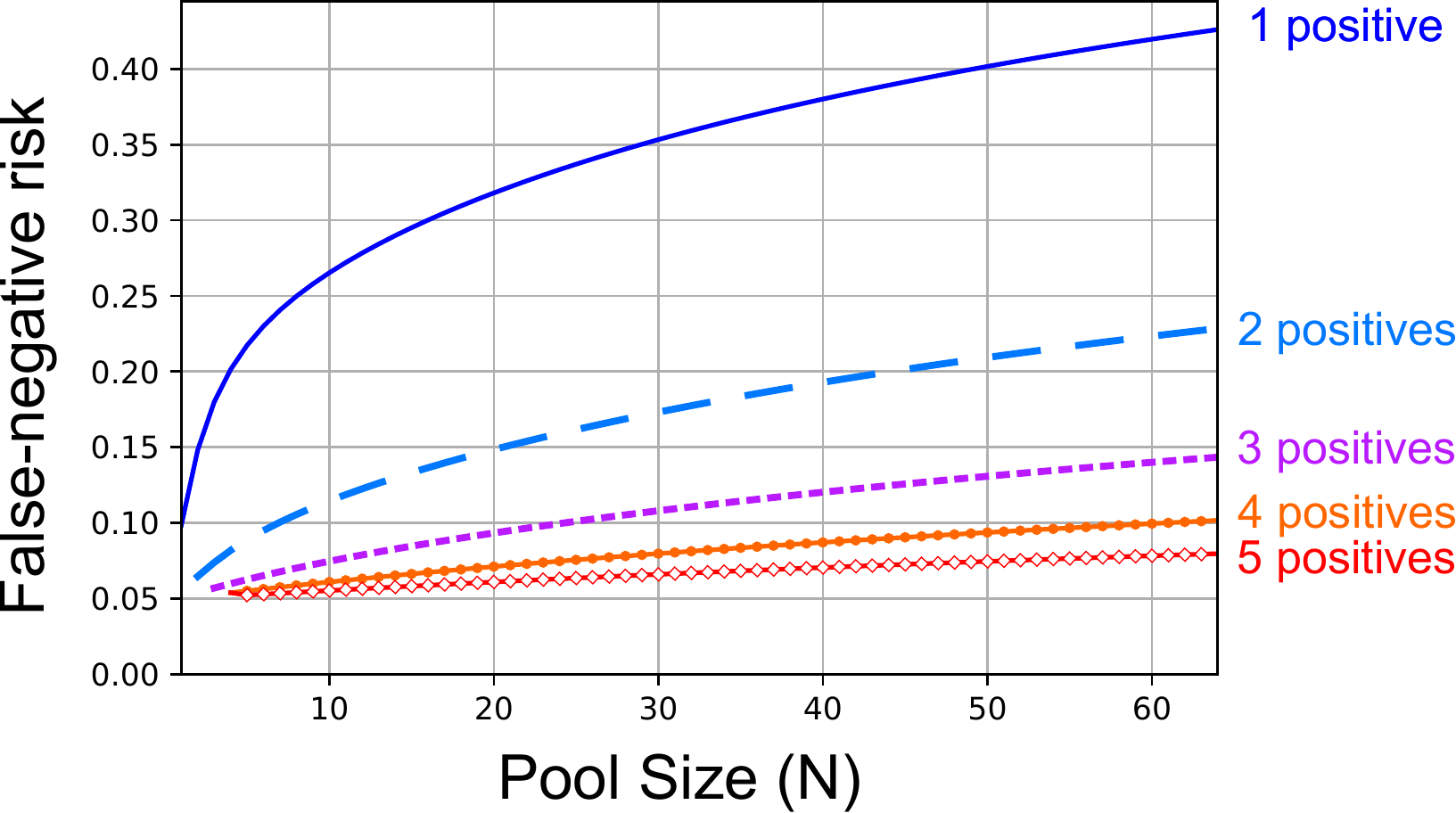}
\caption{Evaluation of the total risk of false negatives estimated according to Eq. (\ref{eq:falsenegativegeneral} as a function of the pool size $N$ for several values of the number of positive samples in the pool $k=1$ (blue solid line); $k=2$ (cyan dashed line); $k=3$ (magenta line); $k=4$ (diamond orange line); $k=5$ (circle red line). We consider a risk that the sample is defective $\zeta = 0.05$. }
\label{si:correlated}
\end{figure}








\end{document}